\theoremstyle{plain}
\newtheorem{lem}{Lemma}
\newtheorem{thm}[lem]{Theorem}
\theoremstyle{definition}
\newtheorem{remark}[lem]{Remark}
\newtheorem{defn}[lem]{Definition}
\newcommand{\R}{\mathbb{R}}
\renewcommand{\P}{\mathbb{P}}
\newcommand{\E}{\mathbb{E}}
\newcommand{\F}{\mathcal{F}}
\newcommand{\N}{\mathbb{N}}
\newcommand{\Ccal}{{\mathcal C}}
\newcommand{\Fcal}{{\mathcal F}}
\newcommand{\Lcal}{{\mathcal L}}
\newcommand{\Ocal}{{\mathcal O}}
\newcommand{\Pcal}{{\mathcal P}}
\numberwithin{equation}{section}
\numberwithin{lem}{section}
\renewenvironment{proof}[1][\proofname] {\par\pushQED{\qed}\normalfont\topsep6\p@\@plus6\p@\relax\trivlist\item[\hskip\labelsep\bfseries#1\@addpunct{.}]\ignorespaces}{\popQED\endtrivlist\@endpefalse}
\title{Calibrated rank volatility stabilized models for large equity markets\footnote{Acknowledgements: This work has been partially supported by the National Science Foundation under grant NSF DMS-2206062. The authors thank Johannes Ruf for helpful discussions regarding the data cleaning process and Steven Campbell for pointing out the unbiased volatility estimator used in this paper. 
}}
\author{
David Itkin\footnote{Department of Mathematics, Imperial College London, \texttt{d.itkin@imperial.ac.uk}}
\and
Martin Larsson\footnote{Department of Mathematical Sciences, Carnegie Mellon University, \texttt{larsson@cmu.edu}}
}
\begin{document}

\maketitle

\begin{abstract}
    In the framework of stochastic portfolio theory we introduce rank volatility stabilized models for large equity markets over long time horizons. These models are rank-based extensions of the volatility stabilized models introduced in \cite{fernholz2005relative}. On the theoretical side we establish global existence of the model and ergodicity of the induced ranked market weights. We also derive explicit expressions for growth-optimal portfolios and show the existence of relative arbitrage with respect to the market portfolio. On the empirical side we calibrate the model to sixteen years of CRSP US equity data matching (i) rank-based volatilities, (ii) stock turnover as measured by market weight collisions, (iii) the average market rate of return and (iv) the capital distribution curve. Assessment of model fit and error analysis is conducted both in and out of sample. To the best of our knowledge this is the first model exhibiting relative arbitrage that has statistically been shown to have a good quantitative fit with the empirical features (i)-(iv). We additionally simulate trajectories of the calibrated model and compare them to historical trajectories, both in and out of sample. 
\end{abstract}

\paragraph*{Keywords:} Stochastic Portfolio Theory, Rank-Based Models, Equity Model Calibration, Relative Arbitrage.
	
	\paragraph*{MSC 2020 Classification:} 91G15, 62P05.

\section{Introduction}

Since the pioneering work of Markowitz \cite{markowitz1952portfolio}, many models have been proposed to capture important features of large equity markets. This is a challenging task especially when constructing models over long time horizons. For example, Markowitz's Modern Portfolio Theory framework requires estimating expected rates of return and covariances of individual assets, which is notoriously difficult to do given the low signal-to-noise ratios present in financial data. Additionally, the introduction of new securities and the delisting of others causes difficulties for calibration, which is especially prevalent over long-time horizons. Indeed, out of the largest 3500 stocks in the CRSP US equity universe on Jan 2, 1990 only 600 remained on December 31, 2022. 

Fernholz, in his monograph \cite{fernholz2002stochastic}, proposed the framework of Stochastic Portfolio Theory (SPT) as a descriptive theory of equity markets, which aims to only use observable/estimable quantities for equity modelling. One class of models that possesses desirable features for this goal are \emph{rank-based} models, such as first-order models \cite{fernholz2002stochastic,banner2005atlas,ichiba2011hybrid} or rank Jacobi models \cite{itkin2021open} which prescribe dynamics for the stocks in such a way that the evolution only depends on a stock's current rank. In numerous studies researchers have shown the stability of various rank-based properties prevalent in equity data, such as the distribution of capital and turnover in asset ranks, which can be used for calibration \cite{fernholz2002stochastic,fernholz2013second,banner2019diversification,campbell2022efficient}. Additionally, these rank-based models provide an elegant workaround to the aforementioned issue of new stocks entering/exiting the market. Indeed, suppose the stock currently occupying the 100th rank gets delisted. Then the stock occupying the 101st rank will take its place at rank 100. Since it is precisely the ranked asset, not the named one, which is relevant in these models the delisting event does not introduce difficulties in the same way as it would for classical name-based models.

The main contribution of this paper is the introduction of the \emph{rank volatility stabilized model} for equity markets, as well as its calibration to US equity data. The model generalizes the volatility stabilized model of \cite{fernholz2005relative} and is parsimonious in that each stock has two clearly interpretable parameters: a growth parameter and a volatility parameter. Despite only having two parameters per asset we are able to fit the following four features of long-term equity modelling:
\begin{enumerate}[noitemsep]
\item Quadratic variation of the ranked market capitalizations, \label{item:one}
\item \label{item:two} The asset turnover at each rank as measured by the collision local times of the market weights, 
\item The annual rate of return for the entire market, \label{item:three}
\item The capital distribution curve. \label{item:four}
\end{enumerate}
The precise definitions of these notions are given in Section~\ref{sec:calibration}. We calibrate the model to daily CRSP equity data for the 16-year period  Jan 2, 1990 - Dec 31, 2005 and compare the performance of the model to the 17-year out of sample period from Jan 3, 2006 - Dec 31, 2022.

Additionally, we show that the rank volatility stabilized model admits \emph{relative arbitrage} with respect to the market portfolio. That is, there exists a time horizon, which depends on the model parameters, and an independent long-only trading strategy that, in the model, will outperform the market portfolio over the time horizon with probability one. The existence of relative arbitrage (in the absence of trading frictions, which we don't consider in this study) has been established under mild conditions in the literature \cite{fernholz2002stochastic,fernholz2005relative,larsson2021relative,banner2008short}. Nevertheless, prior to this work, we are unaware of any model that was calibrated to fit the criteria \ref{item:one}-\ref{item:four} and admits relative arbitrage. 

Although we do not claim that relative arbitrage exists in real equity markets, it is striking that there exist models consistent with these empirical features and relative arbitrage. The portfolios we consider in this study that achieve relative arbitrage in this model are a one-parameter long-only family of portfolios called \emph{diversity-weighted} portfolios. In contrast to the growth-optimal portfolios in this model -- which we derive explicit formulas for in both the \emph{closed market} setup as well as in the \emph{open market} setup recently proposed in \cite{fernholz2018numeraire,karatzas2020open}, where investment is constrained to a subset of the market consisting of the largest assets -- the portfolios achieving relative arbitrage do not require leverage, can be implemented by unsophisticated investors and have been shown to perform well on real data over long time horizons \cite{ruf2020impact,campbell2021functional}. This highlights the potential for models of this type to inform portfolio construction. A detailed analysis of this type is beyond the scope of this paper, but in Section~\ref{sec:portfolio_discussion} we summarize possible approaches to use our calibrated model for portfolio selection. 

The paper is organized as follows. In Section~\ref{sec:model} we introduce the rank volatility stabilized model and establish several theoretical properties it possesses, such as existence of the process, uniqueness of its associated reflected stochastic differential equation and ergodicity of the ranked market weights. Section~\ref{sec:calibration} then develops estimators for the model and calibrates them to items \ref{item:one}-\ref{item:four}. An assessment of the model fit, including error analysis and comparison of simulated trajectories is done in Section~\ref{sec:model_performance}. Section~\ref{sec:portfolio_main} then considers portfolio optimization in the rank volatility stabilized model. Growth-optimal strategies are derived in this section along with the existence of relative arbitrage. Proofs of theoretical results and certain derivations are contained in Appendices~\ref{sec:proofs} and \ref{app:ranked_estimators} respectively for better readability.

\paragraph*{\textbf{Notation}}
\begin{itemize}
\item For $d \in \N$ we denote by $\R^d_+$ the set of $d$-dimensional vectors with nonnegative components. $\R^d_{++}$ denotes all such vectors with strictly positive components.
\item We denote the $d$-dimensional simplex by \[\Delta^{d-1} := \{x \in \R^d_+: x_1 + \dots + x_d = 1\}.\]
We also define its interior $\Delta^{d-1}_+ := \Delta^{d-1} \cap \R^d_{++}$. Analogously, we define the ordered simplex 
\[\nabla^{d-1} := \{y \in \Delta^{d-1}: y_1 \geq \dots \geq y_d\}\] and its interior $\nabla^{d-1}_+ := \nabla^{d-1} \cap \R^d_{++}$.
\item  For a vector $x \in \R^d$ we write $x_{()} = (x_{(1)},\dots,x_{(d)})$ for its ranked vector, which is a permutation of $x$ that satisfies $x_{(1)} \ge x_{(2)} \ge \dots \ge x_{(d)}$.
 We also define the rank identifying functions $r_i: \R^d \to \{1,\dots,d\}$ for $i = 1,\dots,d$ as well as the
name identifying function $n_k: \R^d
\to \{1, \dots , d\}$ for $k = 1, \dots, d$  via 
\begin{align*}
    r_i(x) & = k, \text{ where $k$ is such that } x_i = x_{(k)} \\
    n_k(x) & = i \text{ where $i$ is such that } x_i = x_{(k)}
\end{align*}and ties are broken by lexicographical ordering.
\item For a $d$-dimensional stochastic process $X$ with nonnegative components we set \begin{equation} \label{eqn:tau}
    \tau^X := \inf\{t \geq 0: X_i(t) = 0 \text{ for some } i = 1,\dots,d\} = \inf\{t \geq 0: X_{(d)}(t) = 0\}
\end{equation}
to be the first hitting time of zero for any component of $X$.
\end{itemize}

\section{Rank volatility stabilized models} \label{sec:model}

We consider a fixed number of $d \ge 2$ stocks with strictly positive capitalizations $S_i(t)$, $i=1,\ldots,d$. The \emph{market weights} of the stocks are defined by
\[
X_i(t) = \frac{S_i(t)}{\overline S(t)}, \quad i = 1,\ldots,d,
\]
where $\overline S(t) = S_1(t) + \cdots + S_d(t)$ is the total value of the entire market. In the \emph{rank volatility stabilized model} the capitalizations evolve according to
\begin{equation} \label{eqn:S_dynamics}
	dS_i(t) = a_{r_i(t)}\overline S(t)\, dt  + \sigma_{r_i(t)}\sqrt{S_i(t)\overline S(t)}\, dW_i(t), \quad i = 1,\dots,d,
\end{equation}
for some growth parameters $a_k \in \R$ and volatility parameters $\sigma_k > 0$, $k=1,\ldots,d$. Here $W_1,\ldots,W_d$ are independent standard Brownian motions and $r_i(t)$ is shorthand for $r_i(S_1(t),\ldots,S_d(t))$, the rank of the $i$th stock at time $t$. Equivalently, the dynamics of the stock returns are
\begin{equation} \label{eqn:dS_by_S_dynamics}
\frac{dS_i(t)}{S_i(t)} = \frac{a_{r_i(t)}}{X_i(t)}\, dt + \frac{\sigma_{r_i(t)}}{\sqrt{X_i(t)}}\, dW_i(t), \qquad i=1,\dots,d.
\end{equation}
Summing \eqref{eqn:S_dynamics} over $k$ shows that the dynamics of the overall market return are
\begin{equation} \label{eq_Sbar_returns}
\frac{d\overline S(t)}{\overline S(t)} = \lambda\, dt+ \sum_{i=1}^d \sigma_{r_i(t)}\sqrt{X_i(t)}\, dW_i(t) 
\end{equation}
where
\[
\lambda = a_1 + \cdots + a_d.
\]
Thus, while the individual stocks have stochastic rates of return $a_{r_i(t)} / X_i(t)$, the overall market has a constant rate of return $\lambda$. The spot variance of the overall market is however stochastic in general and given by $\sum_{k=1}^d \sigma_k^2 X_{(k)}(t)$, where we recall that $X_{(k)}(t)$, $k=1,\ldots,d$, are the market weights in decreasing order.  Finally, the dynamics of the market weights are
\begin{equation} \label{eqn:X_dynamics}
\begin{split}
	dX_i(t) & = \left(a_{r_i(t)} - \lambda X_i(t) - \sigma^2_{r_i(t)}X_i(t) + X_i(t)\sum_{j=1}^d \sigma_{r_j(t)}^2X_j(t)\right)dt\\
	& \hspace{1cm} + \sigma_{r_i(t)}\sqrt{X_i(t)}\, dW_i(t) - X_i(t)\sum_{j=1}^d \sigma_{r_j(t)}\sqrt{X_j(t)}\, dW_j(t)
\end{split}
\end{equation}  
for $i=1,\dots,d$. Because $r_i(t) = r_i(X_1(t),\ldots,X_d(t))$, the market weights satisfy an autonomous SDE.

\begin{remark}
The classical volatility stabilized model of \cite{fernholz2005relative} is obtained by letting both the drift and volatility parameter vectors be constant (i.e., $a_k=a_l$ and $\sigma_k = \sigma_l$ for all $k,l$), while if only the volatility vector is constant one obtains the rank Jacobi model of \cite{itkin2021open} for the market weights. The name volatility stabilized refers to the fact that each individual asset's dynamics are stabilized by their market weight in order to produce long-term stability of the market weight vector.
\end{remark}

We now establish existence of processes $S = (S_1,\ldots,S_d)$ and $X = (X_1,\ldots,X_d)$ satisfying \eqref{eqn:S_dynamics} and \eqref{eqn:X_dynamics}. This is not immediate due to the discontinuous volatility coefficients and non-uniform ellipticity of the volatility matrix. Nevertheless, we have the following result, where $\tau^X$ denotes the first time a component of $X$ hits zero, and similarly for $\tau^S$; see \eqref{eqn:tau}. The proof is in Appendix~\ref{app:pf_existence_no_blowup}.

\begin{thm}[Existence] \label{thm:existence}
For any initial condition $s^0 \in \R^d_{++}$ there exists a weak solution $S$ to \eqref{eqn:S_dynamics} on the stochastic time interval $[0,\tau^S)$ with $S(0) = s^0$. Similarly, for any $x^0 \in \Delta^{d-1}_+$ there exists a weak solution $X$ to \eqref{eqn:X_dynamics} on $[0,\tau^X)$ with $X(0) = x^0$.
\end{thm}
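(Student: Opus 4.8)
The plan is to construct the capitalization process $S$ first and then obtain the market weights by normalisation. Indeed, since $X_i = S_i/\overline S$ and $\overline S \ge S_{(1)} > 0$ on $[0,\tau^S)$, we have $\tau^X = \tau^S$, and a direct application of It\^o's formula to $S_i/\overline S$ reproduces \eqref{eqn:X_dynamics}; taking $s^0 = x^0 \in \Delta^{d-1}_+$ (so that $\overline S(0)=1$ and $X(0)=x^0$) then reduces the second assertion to the first. The two genuine difficulties in constructing $S$ are the degeneracy of the diffusion coefficient $\sigma_{r_i}\sqrt{S_i\overline S}$ as $S_i\downarrow 0$ and the discontinuity of the rank map $r_i$ across the crossing hyperplanes $\{s_i=s_j\}$. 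The degeneracy I would sidestep by only asking for a solution up to $\tau^S$; the discontinuity is the main obstacle and is handled below.

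The central idea is a time change that decouples the noise. Working in the clock $ds = \overline S(t)\,dt$ and applying a Dambis--Dubins--Schwarz argument to \eqref{eqn:S_dynamics} (most transparently after passing through $\sqrt{S_i}$), a short computation shows that in the new clock each coordinate obeys
\[
d\widetilde S_i = a_{r_i}\,ds + \sigma_{r_i}\sqrt{\widetilde S_i}\,d\widetilde W_i ,
\]
a system of rank-switched squared-Bessel/CIR-type equations with \emph{constant} (per rank) drift and \emph{diagonal} diffusion matrix $\mathrm{diag}(\sigma_{r_1}^2\widetilde S_1,\dots,\sigma_{r_d}^2\widetilde S_d)$; crucially, the coupling through $\overline S$ has vanished from the coefficients. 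I would therefore run this in reverse: construct $\widetilde S$ on its own hitting-time interval $[0,\widetilde\tau)$, and then recover $S$ by the inverse time change. Concretely, with $A(s) = \int_0^s \overline{\widetilde S}(u)^{-1}\,du$ and $\theta = A^{-1}$, set $S_i(t) = \widetilde S_i(\theta(t))$, define the driving Brownian motions via the time-changed martingales (whose bracket processes match those in \eqref{eqn:S_dynamics} by the change of variables $ds = \overline{\widetilde S}\,du$), and verify that $S$ solves \eqref{eqn:S_dynamics} with $\tau^S = \lim_{s\uparrow\widetilde\tau} A(s)$.

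It remains to construct $\widetilde S$, and here I would localise by the stopping times $\widetilde\tau_n = \inf\{s : \widetilde S_{(d)}(s) \le 1/n \text{ or } \widetilde S_{(1)}(s) \ge n\}$. On the region $\{1/n \le \widetilde S_i \le n\}$ the coefficients are bounded, measurable, and \emph{uniformly elliptic}, since the diagonal diffusion entries are bounded away from $0$ and $\infty$. For such uniformly nondegenerate diffusions with bounded measurable coefficients, weak existence is standard via the martingale problem: one approximates the discontinuous rank coefficients by continuous ones, obtains solutions whose laws are tight, and passes to a limit. The delicate point — and the \textbf{main obstacle} — is that the limit must solve the martingale problem for the \emph{true} (discontinuous) coefficients, which requires that the limit process spend zero Lebesgue time on the null crossing set $\bigcup_{i<j}\{s_i = s_j\}$, so that the values of the coefficients there are immaterial. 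This is precisely where the decoupling pays off: uniform ellipticity on the localised region makes Krylov's occupation-time estimate available, and it bounds the time spent near any Lebesgue-null set, forcing the crossing set to be visited for zero time.

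Finally, I would piece the localised solutions together using the standard localisation property of the martingale problem, let $n\to\infty$ so that $\widetilde\tau_n \uparrow \widetilde\tau$, and transfer back through the time change of the second paragraph to obtain a weak solution $S$ on $[0,\tau^S)$; normalising gives $X$ on $[0,\tau^X) = [0,\tau^S)$, completing the proof. Control of a possible explosion to $+\infty$ before the components hit zero, needed to identify $\tau^S$ with the first hitting time of zero, is the one point I would treat separately (and is the subject of the accompanying no-blow-up estimate).
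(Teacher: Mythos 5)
Your route is genuinely different from the paper's, and in essence it works. The paper goes in the opposite order: it constructs the market weights $X$ first, exploiting that they satisfy an autonomous SDE on the compact simplex. It projects onto the coordinates $(X_1,\dots,X_{d-1})$, where the diffusion matrix is uniformly elliptic away from the boundary, cuts off the coefficients outside the region $\{\min(x_{(d-1)},1-x_1-\dots-x_{d-1})>1/n\}$, invokes Krylov's existence theorem for SDEs with bounded measurable coefficients and uniformly elliptic diffusion, and passes to a limit by tightness of the cut-off solutions; then it defines $\overline S$ \emph{explicitly} as a stochastic exponential (see \eqref{eqn:Sigma}) and verifies $S:=X\overline S$ by the product rule. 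You instead build $S$ first through the time change $ds=\overline S(t)\,dt$, which decouples the system into rank-switched CIR-type equations with diagonal diffusion --- the classical Fernholz--Karatzas picture for volatility-stabilized models, and a valid generalization here (your computation of the time-changed drift and brackets is correct, and ranks are preserved by the time change). The core existence tool is the same in both proofs (Krylov-type existence under local uniform ellipticity, plus localization); what your route buys is that ellipticity becomes transparent without the simplex projection, while what the paper's route buys is compactness of the state space and, crucially, an explicit formula for $\overline S$ that makes all explosion and identification issues disappear.

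That last point is where your proposal has a genuine gap. Your construction produces $S$ on $[0,A(\widetilde\tau-))$, and identifying this interval with $[0,\tau^S)$ requires that $A(\infty)=\infty$ on the event $\{\widetilde\tau=\infty\}$: otherwise $\overline S$ explodes at the finite time $A(\infty)$ while no component has hit zero, so you have a solution on a strictly smaller interval than $[0,\tau^S)$ (your opening claim $\tau^X=\tau^S$ also silently uses finiteness of $\overline S$). You acknowledge this, but defer it to ``the accompanying no-blow-up estimate,'' i.e.\ Theorem~\ref{prop:no_blowup} --- which rules out \emph{hitting zero}, not explosion to $+\infty$, so as written nothing in your argument or in the paper's Theorem~\ref{prop:no_blowup} closes this gap. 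It is closable with a short argument: writing $c(u)=\sum_i\sigma_{r_i(u)}^2\widetilde S_i(u)/\overline{\widetilde S}(u)\in[\min_k\sigma_k^2,\max_k\sigma_k^2]$, It\^o's formula gives $\log\overline{\widetilde S}(s)=\log\overline{\widetilde S}(0)+\int_0^s(\lambda-c(u)/2)\,\overline{\widetilde S}(u)^{-1}du+\beta\bigl(\int_0^s c(u)\,\overline{\widetilde S}(u)^{-1}du\bigr)$ for a Brownian motion $\beta$; if $A(\infty)<\infty$ the right-hand side converges, so $\overline{\widetilde S}$ stays bounded, which forces $A(\infty)=\infty$, a contradiction. (Equivalently, one can use the paper's observation that $\overline S$ is the stochastic exponential of a process with bounded coefficients.) A second, more minor looseness: ``piecing the localised solutions together'' is delicate in the absence of uniqueness --- concatenating weak solutions across localization levels is not automatic; the paper avoids this by extending the cut-off coefficients to the whole space, obtaining \emph{global} solutions for each $n$, and extracting a single limit by tightness, and you should do the same for your CIR system.
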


Since Theorem~\ref{thm:existence} only asserts existence before the first hitting time of zero, it is of interest to establish a condition on the parameters that guarantees this does not occur. This is done in the following result, whose proof is in Appendix~\ref{app:pf_existence_no_blowup}.

\begin{thm}[Boundary non-attainment]\label{prop:no_blowup} 
Assume the Feller-type condition
\begin{equation} \label{eqn:parameter_condition}
a_k + \cdots + a_d \geq \max\left\{ \frac{\sigma_k^2}{2}, \ldots, \frac{\sigma_d^2}{2}\right\}, \quad k = 2,\dots,d.
\end{equation} 
Then any solution $S$ to \eqref{eqn:S_dynamics} and $X$ to \eqref{eqn:X_dynamics} satisfies $\tau^S = \tau^X = \infty$ almost surely. Consequently, under \eqref{eqn:parameter_condition} we have global existence for \eqref{eqn:S_dynamics} and \eqref{eqn:X_dynamics}.
\end{thm}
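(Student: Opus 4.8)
The plan is to reduce the statement to the non-attainment of zero by the smallest market weight, and to establish the latter by a time change that turns the capitalizations into a rank-dependent squared-Bessel-type system, followed by a comparison of the bottom order-statistic sums with genuine squared Bessel processes.

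First I would dispose of blow-up and of the total capitalization. From \eqref{eq_Sbar_returns} the total $\overline S$ solves a linear SDE with constant drift $\lambda$ and squared diffusion coefficient $\sum_{k=1}^d\sigma_k^2 X_{(k)}(t)\le \max_k\sigma_k^2$, since the $X_{(k)}$ sum to one. Hence on $[0,\tau^S)$ the process $\overline S$ is an explicit stochastic exponential whose exponent has finite quadratic variation on every finite interval; this gives $\overline S(t)\in(0,\infty)$ and rules out a finite-time blow-up of $S=\overline S\,X$. Because $X_i=S_i/\overline S$ with $\overline S$ finite and strictly positive, we have $X_{(d)}(t)=0\iff S_{(d)}(t)=0$, so $\tau^S=\tau^X$ and everything reduces to showing that the minimal market weight $X_{(d)}$, equivalently $S_{(d)}$, never reaches $0$.

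Next I would pass to the clock $\Theta(t)=\int_0^t\overline S(u)\,du$. Under this time change the drift $a_{r_i}\overline S\,dt$ becomes $a_{r_i}\,du$ and the quadratic variation $\sigma_{r_i}^2 S_i\overline S\,dt$ becomes $\sigma_{r_i}^2 S_i\,du$, so the time-changed capitalizations $\widehat S_i$ satisfy $d\widehat S_i=a_{r_i}\,du+\sigma_{r_i}\sqrt{\widehat S_i}\,d\widehat W_i$ for independent Brownian motions $\widehat W_i$; this is a rank-dependent squared-Bessel system (when the parameters are rank-independent the $\widehat S_i$ are independent squared Bessel processes of dimension $4a/\sigma^2$, recovering the classical picture). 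I would then study the bottom order-statistic sums $Q_k:=\widehat S_{(k)}+\cdots+\widehat S_{(d)}$. Using the semimartingale calculus for ranked processes, the internal collision local times telescope and only the interface term survives, giving $dQ_k=(a_k+\cdots+a_d)\,du-\tfrac12\,d\Lambda^{(k-1,k)}+dM_k$, where $\Lambda^{(k-1,k)}$ is the collision local time of $\widehat S_{(k-1)}-\widehat S_{(k)}$ at $0$ and $d\langle M_k\rangle=\sum_{\ell=k}^d\sigma_\ell^2\widehat S_{(\ell)}\,du\le\big(\max_{\ell\ge k}\sigma_\ell^2\big)\,Q_k\,du$. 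Condition \eqref{eqn:parameter_condition} says exactly that the drift $a_k+\cdots+a_d$ is at least half the effective squared volatility $\max_{\ell\ge k}\sigma_\ell^2$, i.e. that $Q_k$ has the drift/diffusion balance of a squared Bessel process of dimension $\delta_k:=4(a_k+\cdots+a_d)/\max_{\ell\ge k}\sigma_\ell^2\ge 2$; taking $k=d$ this already forces the minimal rank to have supercritical individual drift $a_d\ge\sigma_d^2/2$.

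The main obstacle is the sign of the collision local time: the interface term $-\tfrac12\,d\Lambda^{(k-1,k)}$ pushes $Q_k$ downward, so the naive argument fails. A squared Bessel process of dimension $\delta_k\ge2$ has an infinite scale function at the origin and never reaches $0$, but the extra downward local time contributes a drift of the unfavourable sign to the corresponding scale (or test) function of $Q_k$, so neither the comparison theorem nor Feller's test applies directly. The resolution I would pursue is to show that this downward local time cannot accumulate fast enough near the boundary to cause attainment: $\Lambda^{(k-1,k)}$ grows only on the collision set $\{\widehat S_{(k-1)}=\widehat S_{(k)}\}$, where the quadratic variation of the gap $\widehat S_{(k-1)}-\widehat S_{(k)}$ is of order $\widehat S_{(k)}\le Q_k$ and hence vanishes as $Q_k\to0$, so the local-time intensity degenerates at the wall while the drift $a_k+\cdots+a_d>0$ stays bounded away from zero. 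Concretely I would run a downward induction on $k$: positivity of $Q_{k-1}$ keeps the neighbouring rank $\widehat S_{(k-1)}$ away from zero and thereby confines the collisions feeding $\Lambda^{(k-1,k)}$ to a region where $Q_k$ is not small, after which a localized comparison of $Q_k$ with a squared Bessel process of dimension $\ge2$ shows $Q_k$ cannot reach $0$. Applying this at $k=d$ gives $\widehat S_{(d)}>0$ for all times, hence $X_{(d)}>0$, and undoing the time change yields $\tau^S=\tau^X=\infty$. I expect the quantitative control of $\Lambda^{(k-1,k)}$ near the boundary—via the occupation-time formula for the gap process together with the induction hypothesis—to be the technical heart of the argument.
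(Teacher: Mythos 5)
Your proposal is correct and, modulo the preliminary time change, is essentially the paper's proof: the same induction on tail sums of the ranked system, the same key geometric fact that a collision at the $(k-1,k)$ interface forces the $k$-th tail sum to be at least half the $(k-1)$-th one (so that localizing the latter away from zero, as the inductive hypothesis permits, neutralizes the downward local time near $\{Q_k = 0\}$), and the same Feller-type conclusion from \eqref{eqn:parameter_condition}. The paper simply skips the time change and runs this argument directly on the tail sums $\overline X_{(k)} = X_{(k)} + \cdots + X_{(d)}$ of the market weights, applying It\^o's formula to $-\log \overline X_{(k)}$ and concluding via Fatou's lemma; note that in your final step a genuine comparison theorem is not available (the quadratic variation of $Q_k$ is only dominated by, not equal to, $\max_{\ell \ge k}\sigma_\ell^2\, Q_k\, du$, and is not a function of $Q_k$ alone), so that step too must be carried out with the $-\log$ Lyapunov computation you allude to, exactly as in the paper.
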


\begin{remark}
If all $\sigma_k$ are equal, the condition \eqref{eqn:parameter_condition} is also known to be necessary for $X$ to avoid the boundary; see \cite{itkin2021open}.
\end{remark}

The calibration method developed in later sections depends on properties of the ranked market weight process $X_{()} = (X_{(1)}, \ldots, X_{(d)})$, which we now investigate. The formulas of \cite{Banner2008Local} yield the dynamics
 \begin{equation} \label{eqn:ranked_weights}
	\begin{split}
		dX_{(k)}(t) & = \left(a_k - \lambda X_{(k)}(t) - \sigma^2_{k}X_{(k)}(t) + X_{(k)}(t)\sum_{j=1}^d \sigma_{j}^2X_{(j)}(t)\right)dt\\
		& \hspace{1cm} + \sigma_{k}\sqrt{X_{(k)}(t)}\, dB_k(t) - X_{(k)}(t)\sum_{j=1}^d \sigma_{j}\sqrt{X_{(j)}(t)}\, dB_j(t) + d\Phi_k(t)
	\end{split}
\end{equation}
for $k = 1,\dots,d$. Here $B_k(t) = \sum_{i=1}^d \int_0^t 1_{\{r_i(s) = k\}}\, dW_i(s)$ are independent standard Brownian motions, and the boundary reflection terms
\begin{equation} \label{eqn:Phi}
	d\Phi_k(t) = \frac{1}{N_k(t)}\left(\sum_{j=k+1}^d dL_{k,j}(t)- \sum_{j=1}^{k-1}\,  dL_{j,k}(t)\right)
\end{equation}
are given in terms of $L_{j,k}$, the local time at zero of $X_{(j)} - X_{(k)}$, and $N_k(t) = |\{i: X_i(t) = X_{(k)}(t)\}|$, the number of weights that occupy the $k$th rank at time $t$. This shows that $X_{()}$ satisfies a \emph{reflected stochastic differential equation} (RSDE), and we assert that this equation is well-posed. The basic theory of RSDEs is reviewed in Appendix~\ref{app:rsde}, where the following theorem is also proved.

\begin{thm}[RSDE well-posedness]\label{thm:rsde}
	Consider the  RSDE on $\nabla^{d-1}_+$ with normal reflection
\begin{equation} \label{eqn:Y_rsde}
	\begin{split}
		dY_k(t) & = \left(a_k - \lambda Y_{k}(t) - \sigma^2_{k}Y_{k}(t) + Y_{k}(t)\sum_{j=1}^d \sigma_{j}^2Y_{j}(t)\right)dt\\
		& \hspace{1cm} + \sigma_{k}\sqrt{Y_{k}(t)}\, dB_k(t) - Y_{k}(t)\sum_{j=1}^d \sigma_{j}\sqrt{Y_{j}(t)}\, dB_j(t) + d\Phi_k(t)
	\end{split} 
\end{equation} 
for $k=1,\dots,d$. There exists a pathwise unique solution $(Y,\Phi)$ to \eqref{eqn:Y_rsde} on the stochastic interval $[0,\tau^Y)$. In particular, the representation $Y = X_{()}$, where $X$ is any solution to \eqref{eqn:X_dynamics} and $\Phi$ is given by \eqref{eqn:Phi}, holds on $[0,\tau^Y) = [0,\tau^X)$. As such, under the parameter restriction \eqref{eqn:parameter_condition} we have global well-posedness for \eqref{eqn:Y_rsde}, and $X_{()}$ is a strong Markov process.
\end{thm}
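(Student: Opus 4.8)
The plan is to treat \eqref{eqn:Y_rsde} as a reflected SDE on the convex polytope $\nabla^{d-1}$ (the intersection of the simplex with the ordered/Weyl chamber region) equipped with normal reflection at the collision faces $\{y_k = y_{k+1}\}$, and to obtain existence and pathwise uniqueness by localizing away from the degenerate boundary $\{y_d = 0\}$, where the square-root diffusion coefficients fail to be Lipschitz. Existence and the representation $Y = X_{()}$ are then obtained by combining the localized construction with the rank-dynamics formula of \cite{Banner2008Local}, and globality follows from Theorem~\ref{prop:no_blowup}.

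First I would set up the localization. For $\epsilon > 0$ let $\sigma_\epsilon = \inf\{t : Y_d(t) \le \epsilon\}$. On the region $\{y \in \nabla^{d-1} : y_d \ge \epsilon\}$ every coordinate satisfies $y_k \ge y_d \ge \epsilon$, so each map $y \mapsto \sigma_k\sqrt{y_k}$, together with the polynomial drift and the remaining diffusion terms in \eqref{eqn:Y_rsde}, is bounded and Lipschitz. I would then invoke the theory of the Skorokhod problem in convex domains: since $\nabla^{d-1}$ is convex, the associated Skorokhod map with normal reflection is well defined and obeys the monotonicity estimate $\langle y - y',\, d\Phi - d\Phi'\rangle \le 0$ coming from the inward normal cone. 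Combined with the Lipschitz coefficients, a standard Picard/Gronwall argument (Tanaka, Lions--Sznitman, Dupuis--Ishii) yields a pathwise unique strong solution $(Y^\epsilon,\Phi^\epsilon)$ up to $\sigma_\epsilon$. Consistency of the stopped solutions as $\epsilon \downarrow 0$ then lets me paste them into a pathwise unique solution on $[0,\tau^Y)$, where $\tau^Y = \lim_{\epsilon \downarrow 0}\sigma_\epsilon = \inf\{t : Y_d(t) = 0\}$.

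Next I would establish the representation $Y = X_{()}$. Given any solution $X$ to \eqref{eqn:X_dynamics}, which exists by Theorem~\ref{thm:existence}, the rank-dynamics formulas of \cite{Banner2008Local} already produce \eqref{eqn:ranked_weights}, so $X_{()}$ solves \eqref{eqn:Y_rsde} with reflection term $\Phi$ as in \eqref{eqn:Phi} and $B_k$ as stated. The point requiring care is confirming that this $\Phi$ is genuinely the normal reflection: each $\Phi_k$ has bounded variation, its total variation increases only on the collision set, and the pair $(d\Phi_k, d\Phi_{k+1})$ pushes in the inward normal direction $e_k - e_{k+1}$ to the face $\{y_k = y_{k+1}\}$, which is tangent to the simplex. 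The weight $1/N_k$ and the pairwise local times $L_{j,k}$ are precisely the terms encoding multi-way collisions at the lower-dimensional faces of the polytope. Once this is verified, $X_{()}$ is a solution of \eqref{eqn:Y_rsde}, so the pathwise uniqueness from the first step forces $Y = X_{()}$ on $[0,\tau^Y)$; since $Y_d = X_{(d)}$, the hitting times coincide, giving $\tau^Y = \tau^X$.

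Finally, under \eqref{eqn:parameter_condition} Theorem~\ref{prop:no_blowup} gives $\tau^X = \tau^Y = \infty$ almost surely, so the solution is global, and the strong Markov property of $X_{()}$ follows from pathwise uniqueness and time-homogeneity of the coefficients via the standard Yamada--Watanabe argument for reflected SDEs. I expect the main obstacle to be the uniqueness step: reconciling the non-Lipschitz square-root diffusion near $\{y_d = 0\}$ with the reflecting boundary at the collision faces, and in particular checking that the reflection term \eqref{eqn:Phi} coincides with the normal reflection at the singular (higher-order collision) faces of $\nabla^{d-1}$, where several reflection directions meet and the monotonicity estimate underlying the contraction must be applied with care.
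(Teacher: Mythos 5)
Your proposal is correct and takes essentially the same route as the paper's proof: the paper also localizes on the subdomains $D_n=\{y\in\nabla^{d-1}_+: y_d>1/n\}$, where the square-root coefficients become Lipschitz, invokes Tanaka's theorem for reflected SDEs in convex domains to get pathwise uniqueness there, identifies any solution with $X_{()}$ (which solves \eqref{eqn:Y_rsde} via the formulas of \cite{Banner2008Local}), and sends the localization level to zero, with globality from Theorem~\ref{prop:no_blowup} and the strong Markov property from uniqueness. The only cosmetic difference is that the paper takes existence directly from $Y=X_{()}$ rather than re-constructing solutions by Picard iteration on the localized domains.
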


In addition to the dynamics \eqref{eqn:ranked_weights}, our calibration method relies crucially on ergodicity of the ranked market weight process. Suppose that \eqref{eqn:parameter_condition} holds. From Theorem~\ref{thm:rsde}, we have that $X_{()}$ is a strong Markov process. By viewing it as a process on the larger compact state space $\nabla^{d-1}$ we immediately obtain the existence of an invariant measure on $\nabla^{d-1}$. The non-attainment of the boundary implies that the measure is supported on $\nabla^{d-1}_+$ and uniform ellipticity of the diffusion matrix on any compact set $K \subset \nabla^{d-1}_+$ can then be used to establish uniqueness of the invariant measure using an approach developed in \cite{harrison1987brownian} for reflected Brownian motion. This leads us to the following ergodicity result, whose detailed proof is given in Appendix~\ref{app:ergodic}. We denote by $\P_y$ the law of $X_{()}$ when initiated at $X_{()}(0) = y \in \nabla^{d-1}_+$. 

\begin{thm}[Ergodicity] \label{thm:ergodicity}
Under the condition \eqref{eqn:parameter_condition}, $X_{()}$ admits a unique invariant measure $\nu$ on $\nabla^{d-1}_+$ and we have the Birkhoff ergodic theorem
\[\lim_{T \to \infty} \frac{1}{T}\int_0^T f(X_{()}(t))\, dt = \int_{\nabla^{d-1}_+} f(y)\, d\nu(y), \qquad \P_y\text{-a.s.,}\]
    for every $\nu$-integrable function $f$ and every $y \in \nabla^{d-1}_+$.
\end{thm}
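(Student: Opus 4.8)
The plan is to follow the roadmap indicated before the statement: first produce \emph{some} invariant measure by a compactness argument on the enlarged compact state space, then confine it to the interior using boundary non-attainment, and finally establish uniqueness via a strong Feller plus irreducibility argument in the spirit of \cite{harrison1987brownian}. For existence I would first note that the transition semigroup $(P_t)$ of $X_{()}$ is weakly Feller on the compact space $\nabla^{d-1}$, which follows from pathwise uniqueness and continuous dependence on initial data for \eqref{eqn:Y_rsde} (Theorem~\ref{thm:rsde}) together with global existence under \eqref{eqn:parameter_condition} (Theorem~\ref{prop:no_blowup}). Since $\nabla^{d-1}$ is compact, the averaged occupation measures $\mu_T(\cdot) = \frac{1}{T}\int_0^T \P_y(X_{()}(t) \in \cdot)\, dt$ are automatically tight, and any weak subsequential limit is $(P_t)$-invariant by the Krylov--Bogolyubov theorem, producing an invariant probability measure $\nu$ on $\nabla^{d-1}$. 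To see that $\nu$ charges only the interior, I would again use Theorem~\ref{prop:no_blowup}: since the zero boundary $\{y_d = 0\}$ is not attained, a moment estimate on $1/Y_{(d)}$ of the type underlying the non-attainment proof, combined with the portmanteau theorem, gives $\nu(\{y_d = 0\}) = 0$, so $\nu$ is supported on $\nabla^{d-1}_+$.

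The heart of the matter is uniqueness, for which I would invoke the classical principle that a Markov semigroup which is simultaneously \emph{strong Feller} and \emph{irreducible} admits at most one invariant probability measure (Doob's theorem). For the strong Feller property I would localize to compact sets $K \subset \nabla^{d-1}_+$, on which every coordinate is bounded below away from zero and hence the diffusion matrix of \eqref{eqn:Y_rsde}, restricted to the tangent space of the simplex, is uniformly elliptic; interior parabolic regularity then shows that $P_t$ maps bounded measurable functions to continuous ones. For irreducibility I would prove a support theorem for the reflected diffusion \eqref{eqn:Y_rsde} by a Stroock--Varadhan control argument adapted to the reflection terms \eqref{eqn:Phi} along the collision faces, yielding that from any $y \in \nabla^{d-1}_+$ the process reaches every nonempty open subset of $\nabla^{d-1}_+$ with positive probability in finite time. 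Crucially, confinement to the interior (Theorem~\ref{prop:no_blowup}) means these arguments only ever invoke the ellipticity that holds away from the zero boundary.

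With uniqueness in hand, $\nu$ is ergodic, so Birkhoff's theorem for the stationary process yields the stated time-average convergence for $\nu$-almost every initial condition. To upgrade this to \emph{every} $y \in \nabla^{d-1}_+$, I would combine strong Feller and irreducibility with a Lyapunov argument ensuring recurrence to interior compact sets to conclude that $X_{()}$ is positive Harris recurrent; the Harris ergodic theorem then delivers the $\P_y$-a.s.\ convergence asserted in Theorem~\ref{thm:ergodicity} for every starting point and every $\nu$-integrable $f$, removing the exceptional null set left by the stationary Birkhoff theorem.

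The step I expect to be most delicate is uniqueness, namely establishing the strong Feller property and irreducibility in the presence of reflection along the collision faces and the degeneracy of the diffusion at the zero boundary. The technique of \cite{harrison1987brownian}, developed precisely for reflected diffusions, is well suited here; the main work is to verify that its structural hypotheses---uniform ellipticity on interior compacts together with confinement of the dynamics to the interior---are exactly what Theorems~\ref{thm:rsde} and \ref{prop:no_blowup} supply.
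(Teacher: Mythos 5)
Your existence step matches the paper's (Krylov--Bogolyubov on the compact space $\nabla^{d-1}$ plus boundary non-attainment to confine $\nu$ to $\nabla^{d-1}_+$), and your Harris-recurrence upgrade of the Birkhoff statement to every starting point is a reasonable way to make precise what the paper leaves to ``standard results.'' But your uniqueness argument is not the paper's argument, and it contains a genuine gap. The paper does \emph{not} prove strong Feller plus irreducibility; it follows \cite{harrison1987brownian} in a different way: it first shows (via the occupation density formula and local times) that $X_{()}$ spends zero Lebesgue time on $\partial\nabla^{d-1}_+$, then uses an excursion decomposition (strong Markov property and shift operators between hitting times) to compare the process, away from the boundary, with an auxiliary nondegenerate Lipschitz SDE on $\R^{d-1}$ having strictly positive transition densities. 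This yields that the projection of \emph{any} invariant measure is equivalent to Lebesgue measure on $\widetilde\nabla^{d-1}_+$; two invariant measures would then be equivalent to each other, contradicting the mutual singularity of distinct ergodic measures from the ergodic decomposition theorem. No regularity of the semigroup is ever invoked.

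The gap in your route is the strong Feller step. You claim it follows from uniform ellipticity on compact sets $K \subset \nabla^{d-1}_+$ together with ``interior parabolic regularity,'' but compact subsets of $\nabla^{d-1}_+$ are not interior regions for the dynamics: the state space $\nabla^{d-1}_+$ \emph{contains} the collision faces $\{y_k = y_{k+1}\}$ (only $\{y_d = 0\}$ is excluded), the process hits these faces almost surely, and there the dynamics \eqref{eqn:Y_rsde} carry the local-time reflection terms $\Phi_k$. Interior parabolic estimates say nothing about the smoothing of the semigroup across a reflecting boundary, and $\nabla^{d-1}_+$ is a polyhedral domain with corners where several faces meet, precisely the setting in which boundary regularity for normally reflected diffusions is delicate and not available off the shelf. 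Your closing paragraph compounds this by asserting that the technique of \cite{harrison1987brownian} supplies what the strong Feller/irreducibility argument needs; in fact that paper's technique was developed to \emph{avoid} proving strong Feller for reflected processes in polyhedral domains, which is exactly why the paper adopts it. So while Doob's theorem is a correct principle and your plan could in principle be completed, as written its central step rests on a regularity claim that your stated tools cannot deliver; the occupation-measure equivalence argument is the workable (and the paper's actual) path.
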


\begin{remark}
The uniqueness result in Theorem~\ref{thm:rsde} ensures that any statistic of the ranked market weights has a distribution that is uniquely determined by the parameters of the model. This is clearly a desirable feature. Note however that this is not the same as statistical identifiability of the model parameters. Indeed, the calibration procedure developed below would have recovered the parameters uniquely from data (up to statistical errors) even in the absence of uniqueness for \eqref{eqn:Y_rsde}, at least if the ergodicity property in Theorem~\ref{thm:ergodicity} could be established without uniqueness. Furthermore, let us emphasize that while we do establish uniqueness for \eqref{eqn:Y_rsde}, uniqueness for \eqref{eqn:S_dynamics} and \eqref{eqn:X_dynamics} remains an open question, the key difficulty being the discontinuous nature of the volatility process. 
\end{remark}

\section{Data and calibration} \label{sec:calibration}

In this section we discuss how the parameters $a_k$ and $\sigma_k$ specifying the rank volatility stabilized model can be calibrated. We aim to fit the four features \ref{item:one}--\ref{item:four} laid out in the introduction. The first feature \ref{item:one} pins down the volatility parameters $\sigma_k$, while the second feature \ref{item:two} can be used to determine all but one of the growth parameters $a_k$. The one remaining degree of freedom is fixed by selecting $\lambda = a_1+\cdots+a_d$, which is the market rate of return; see \eqref{eq_Sbar_returns}. We establish a relationship showing that $\lambda$ offers a trade-off between closely fitting the capital distribution curve on one hand, and the collisions, which measure market turnover, on the other hand. Indeed, increasing $\lambda$ improves the fit to the curve, but at the expense of the fit to the collisions. Remarkably, the value $\lambda = 0.11$ that matches the historical market rate of return also turns out to offer a balanced fit to both the empirical collisions and capital distribution curves. We view this as our preferred calibrated model, though later on in Subsection~\ref{sec:model_performance_summary} we also discuss some benefits and drawbacks of using other values of $\lambda$.

We calibrate the model to US equity data for the 16-year period from Jan 1, 1990 to Dec 31, 2005. We use daily price data from  the CRSP equity universe. As usual, we filter equities to only include common stock. The raw CRSP data is cleaned and preprocessed according to the publicly available code provided by Ruf \cite{ruf2023github}. Then, for each trading day in the period we take the $d$ largest stocks as determined by market capitalization's from the previous day. In the analysis to come we set $d = 3500$, as the cleaned equity universe consists of at least this number of equities for the duration of the estimation period. All the parameters are calibrated using this dataset. In Section~\ref{sec:model_performance} we also assess the performance of the model out-of-sample using analogously processed data for the time period Jan 1, 2006 to Dec 31, 2022.

\subsection{Volatility calibration} \label{S_vol_calibration}

The dynamics \eqref{eqn:dS_by_S_dynamics} imply that the quadratic variations of the ranked log-capitalizations satisfy
\begin{equation} \label{eqn:cap_QV}
\frac{1}{T}\langle \log S_{(k)} \rangle(T) = \frac{\sigma_k^2}{T}\int_0^T \frac{1}{X_{(k)}(t)}\, dt, \qquad k=1,\dots,d.
\end{equation}
Theorem~\ref{thm:ergodicity} shows that, in the model, the right-hand side of \eqref{eqn:cap_QV} stabilizes for large $T$. This is also empirically supported by the stability of the capital distribution curve. Given observation times $0 = t_0 < t_1 < \dots < t_N = T$ we discretize the integral and replace the quadratic variation by a sum of squared increments to obtain the estimator
\begin{equation} \label{eqn:sigma_estimator}
\frac{\sum_{i=0}^{N-1} (\log S_{n_k(t_i)}(t_{i+1}) - \log S_{n_k(t_i)}(t_i))^2}{\sum_{i=0}^{N-1}{X_{(k)}^{-1}(t_i)}\Delta t_i}, \quad k = 1,\ldots,d,
\end{equation}
for the volatility vector, where $\Delta t_i = t_{i+1} - t_i$. Since we use daily observations we have $\Delta t_i = 1/252$. Using the estimator \eqref{eqn:sigma_estimator} directly yields volatility parameters that generally decrease with rank but are noisy, which is undesirable for simulating the model. For this reason we smooth the vector in \eqref{eqn:sigma_estimator}. We use uniform averaging with a window size of 15, although the precise choice of filter does not have a material impact on the results. We denote the smoothed vector by $(\widehat\sigma_1^2, \ldots, \widehat\sigma_d^2)$ and use this as our estimate for $(\sigma_1^2,\ldots,\sigma_d^2)$. Figure~\ref{fig:sigma} shows the ``raw'' estimates \eqref{eqn:sigma_estimator} (thin gray line) as well as the processed estimates $\widehat\sigma_1^2, \ldots, \widehat\sigma_d^2$ (thick solid line).

\begin{figure}
\begin{center}
\includegraphics[scale=0.7]{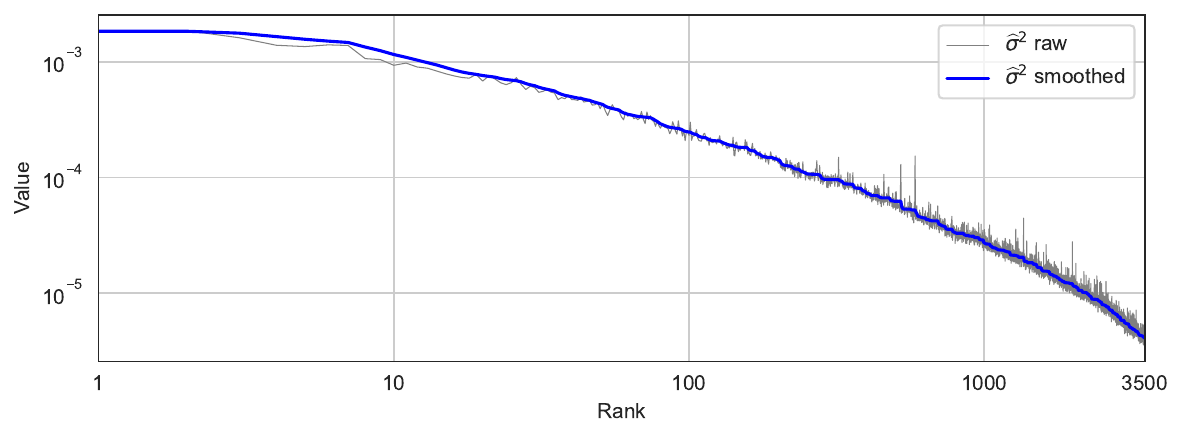}
\caption{Volatility parameter estimates $\widehat\sigma_k^2$, $k=1,\ldots,d$ (thick solid line), obtained using uniform averaging of the raw estimates in \eqref{eqn:sigma_estimator} (thin gray line).}
\label{fig:sigma}
\end{center}
\end{figure}

\begin{remark}
Empirically, the spot variance of the $k$th ranked stock tends to increase with $k$ but remain within the same order of magnitude for different values of $k$. In view of \eqref{eqn:dS_by_S_dynamics}, the model spot variance is $\sigma_k^2 / X_{(k)}(t)$. Therefore, because the market weights $X_{(k)}(t)$ have orders of magnitude that decrease with $k$, so must the volatility parameters $\sigma_k^2$. This explains the decreasing shape in Figure~\ref{fig:sigma}.
\end{remark}

\begin{remark} \label{rem_vol_bias}
In the numerator of \eqref{eqn:sigma_estimator} one could instead use increments of the form
\begin{equation} \label{eqn:rank_dif}
    \log S_{(k)}(t_{i+1}) - \log S_{(k)}(t_i).
\end{equation}
This differs from \eqref{eqn:sigma_estimator} whenever the $k$th ranked stock switches name from time $t_i$ to $t_{i+1}$. The estimator in \eqref{eqn:sigma_estimator} turns out to perform better, as the one with \eqref{eqn:rank_dif} in the numerator exhibits bias. We discuss this point further in Appendix~\ref{app:vol_estimator}.
\end{remark}

\subsection{Collision calibration}  \label{S_collision_calibration}

Next we examine the collision local times, which will serve as a stepping stone toward estimating the growth parameters $a_k$. Specifically, we are interested in obtaining estimates for the quantities
\begin{equation} \label{eq_phi_k_def}
\phi_k := \lim_{T \to \infty} \frac{\Phi_k(T)}{T}, \quad k = 1,\dots, d,
\end{equation}
where $\Phi_k$ is the boundary reflection process in \eqref{eqn:Phi}. The ergodic property of Theorem~\ref{thm:ergodicity} guarantees that $\phi_k$ exists.
To estimate $\phi_k$ we first consider the sums
\begin{equation} \label{eqn:phi_bar_def}
\overline\phi_k = \phi_1 + \cdots + \phi_k, \quad k=1,\ldots,d,
\end{equation}
where we note that $\overline\phi_d = \phi_1 + \cdots + \phi_d = 0$. An estimator for $\overline \phi_k$ is
\begin{equation} \label{eqn:hat_bar_phi}
\widehat{\overline\phi}_k = \frac{1}{T} \sum_{i=0}^{N-1}(X_{n_1(t_i)}(t_i) + \dots + X_{n_{k}(t_i)}(t_i))\log\left( \frac{S_{n_1(t_{i+1})}(t_{i+1}) + \dots + S_{n_{k}(t_{i+1})}(t_{i+1})}{S_{n_1(t_{i})}(t_{i+1}) + \dots + S_{n_{k}(t_{i})}(t_{i+1})}\right)
\end{equation}
for a large time horizon $T$. This estimator is inspired by one developed by Fernholz \cite[Equation~(5.4.1)]{fernholz2002stochastic} for a different, but related, quantity. The basic idea is to relate $\overline\phi_k$ to the \emph{leakage} of a well-chosen large cap portfolio. A self-contained derivation is provided in Appendix~\ref{app:phi_derivation}, and Remark~\ref{R_phi_bar_interpretation} below contains a related discussion. Estimates $\widehat \phi_k$ for $\phi_k$ are then obtained based on \eqref{eqn:phi_bar_def} by taking successive differences of $\widehat {\overline \phi}_k$ and using that $\widehat {\overline \phi}_d = 0$.
Figure~\ref{fig:phi} plots the empirically estimated values for $\widehat {\overline \phi}_k$ and $\widehat \phi_k$. The top panel reveals that the $d$th collision parameter estimate is $\widehat \phi_d = - \widehat{\overline\phi}_{d-1} \approx -0.22$. This is orders of magnitude larger in absolute value than $\widehat \phi_k$ for $k < d$, plotted in the bottom panel of Figure~\ref{fig:phi}. For this reason the bottom panel does not show $\widehat\phi_d$ in order to improve readability. The difference in magnitude is expected as the smallest weight only has a one-sided reflection term in its dynamics, namely when it collides with a larger market weight. The other weights have reflection terms with opposite signs, which contribute whenever they collide with a larger or smaller market weight. These opposing effects partially cancel and produce more moderate values for $\widehat \phi_k$, $k < d$.


\begin{figure}
    \centering
    \includegraphics[scale=0.6]{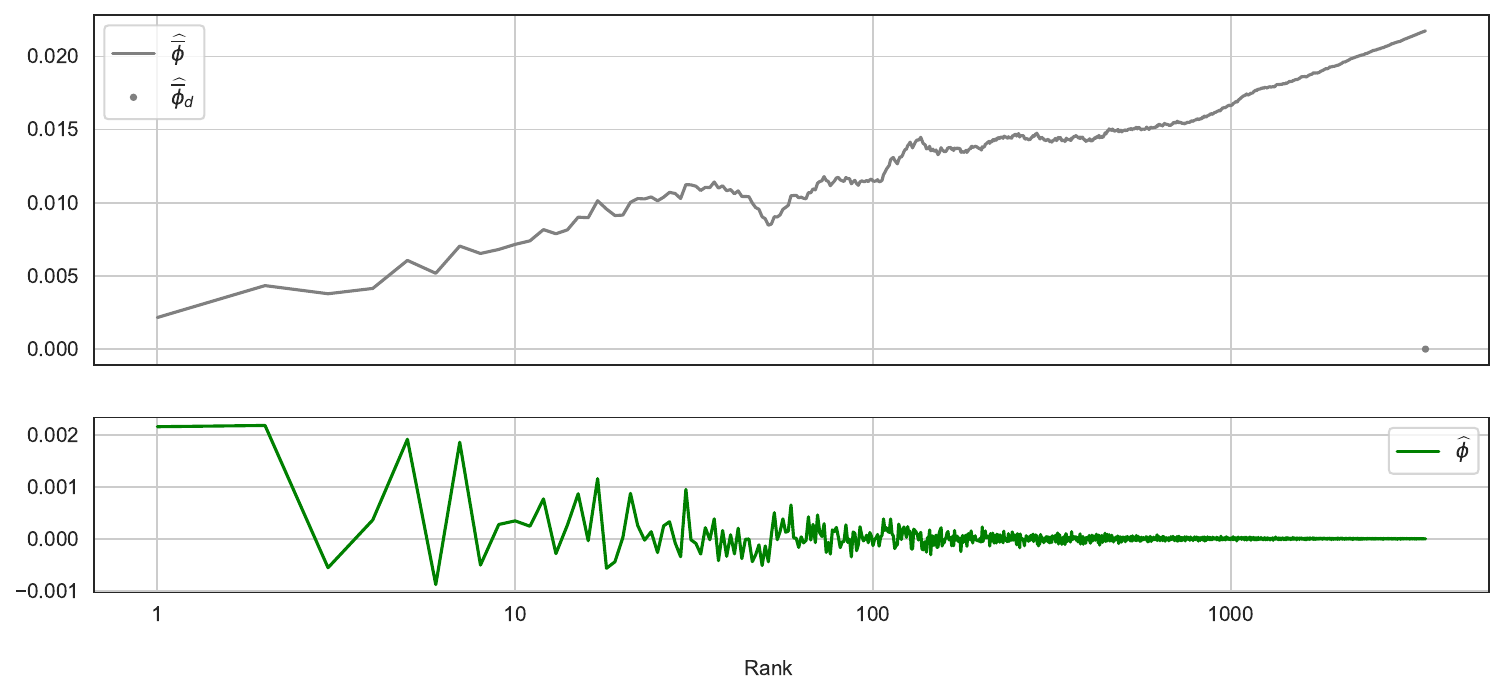}
    \caption{Estimated values for $\widehat {\overline \phi}_k$, $k=1,\ldots,d$ (top panel) and $\widehat \phi_k$, $k=1,\ldots,d-1$ (bottom panel) plotted with the x-axis on a log-scale. The bottom panel does not show $\widehat\phi_d$, which is orders of magnitude larger in absolute value.}
    \label{fig:phi}
\end{figure}

\begin{remark} \label{rem:bar_phi}
The sums $\overline\phi_k$ are positive for $k=1,\ldots,d-1$. Indeed, we have

\begin{equation}\label{eqn:bar_phi}
\overline\phi_k = \lim_{T \to \infty} \frac{\overline \Phi_{k}(T)}{T}
\quad \text{where} \quad
\overline \Phi_{k}(T) = \sum_{j=1}^k \Phi_{j}(T) = 
\sum_{l=1}^{{k}}\sum_{j={k+1}}^{{d}}
\int_0^T\frac{1}{N_l(t)} dL_{l,j}(t).
\end{equation}
\end{remark}

\begin{remark} \label{R_phi_bar_interpretation}
One can interpret $\overline\phi_k$ by observing that it measures the asymptotic effect of \emph{leakage} on the growth rate of a certain capitalization-weighted portfolio. Specifically, consider a trading strategy which invests in the top $k$ assets by choosing positions proportionate to the wealth of the investor relative to the market wealth and allocates the remaining holdings to the market portfolio. The portfolio weights of this strategy are
\begin{equation} \label{eqn:exp_portfolio}
\pi_i(t) = X_i(t)\left(\frac{\overline S(t)}{W^\pi(t)}1_{\{r_i(t) \leq k\}} + 1 - \frac{\overline S(t)}{W^\pi(t)}\sum_{j=1}^k X_{(j)}(t)\right)
\end{equation}
for $i= 1,\dots,d$, where $W^\pi$ is the wealth of the investor's portfolio and $\overline S$ is the market wealth. It can be shown that the wealth of this portfolio relative to the market portfolio is given by
\[\frac{W^\pi(T) - \overline S(T)}{\overline S(T)} = \sum_{j=1}^k X_{(j)}(T) - \sum_{j=1}^k X_{(j)}(0) -\overline \Phi_k(T).\]
Dividing by $T$ and sending $T \to \infty$ we obtain that the long-term normalized performance of this portfolio relative to the market is 
\[\lim_{T \to \infty} \frac{1}{T}\frac{W^\pi(T) - \overline S(T)}{\overline S(T)} = - \overline \phi_k.\]
We see that asymptotically the portfolio $\pi$ underperforms the market precisely by the rate $\overline \phi_k$. This effect is known as \emph{leakage}. It arises because the portfolio $\pi$ invests in the top $k$ ranked assets only, and therefore incurs a mechanical rebalancing cost whenever the $k$th and $(k+1)$th ranked weights change place.
\end{remark}

\begin{remark}
Given the explicit representation \eqref{eqn:Phi} of $\Phi$ in terms of semimartingale local times, one may wonder why the estimator \eqref{eqn:hat_bar_phi} is used rather than one based on the occupation density approximation
\[L_{k,j}(t) \approx \frac{1}{\epsilon}\int_0^t1_{[0,\epsilon)}(X_{(k)}(s) - X_{(j)}(s))d\langle X_{(k)} - X_{(j)}\rangle(s)\]
for a small choice of $\epsilon > 0$. The reason is that such an estimator will be data inefficient. Indeed, for small $\epsilon$, where the approximation is theoretically accurate, the integrand may be zero for many observations. In contrast, the estimator in  \eqref{eqn:hat_bar_phi} activates whenever ranks switch, regardless of the resulting distance between the market weights. Moreover, \eqref{eqn:hat_bar_phi} does not depend on a hyperparameter $\epsilon$ that needs to be tuned. 
\end{remark}

\subsection{Growth parameter calibration}  \label{S_growth_calibration}

We are now in a position to obtain estimates for the growth parameters $a_k$ up to one last degree of freedom that we discuss further in the next subsection. The ranked market weight dynamics \eqref{eqn:ranked_weights} and the ergodic property imply the relationship
\begin{equation} \label{eqn:phi_relationship}
    0 = a_k - \lambda \mu_k - \sigma_k^2 \mu_k + \rho_k + \phi_k, \qquad k=1,\dots,d,
\end{equation}
where $\lambda = a_1 + \cdots + a_d$ is the market rate of return (see \eqref{eq_Sbar_returns}), $\phi_k$ is given by \eqref{eq_phi_k_def}, and we set
\begin{equation} \label{eqn:mu_k_rho_k}
\mu_k = \lim_{T\to\infty} \frac{1}{T} \int_0^T X_{(k)}(t) dt, \quad
\rho_k = \lim_{T\to\infty} \frac{1}{T} \int_0^T X_{(k)}(t) \sum_{j=1}^d \sigma_j^2 X_{(j)}(t) dt.
\end{equation}
Thus $\mu_k$ is the long-term average market weight of the $k$th ranked stock, and $\rho_k$ is the long-term average product between the market weight $X_{(k)}$ and the market spot variance $\sum_{j=1}^d \sigma_j^2 X_{(j)}$ (see \eqref{eq_Sbar_returns}). The ergodic property of Theorem~\ref{thm:ergodicity} ensures that these quantities are deterministic and given as the corresponding moments of the invariant measure $\nu$. In \eqref{eqn:phi_relationship} we now replace $\sigma_k^2, \phi_k$ by the estimates $\widehat\sigma_k^2, \widehat\phi_k$, and $\mu_k, \rho_k$ by the estimates
\begin{equation} \label{eqn:mu_hat_rho_hat}
\widehat \mu_k = \frac{1}{N}\sum_{i=0}^{N-1} X_{(k)}(t_i), \quad 
\widehat \rho_k = \frac{1}{N}\sum_{i=0}^{N-1} X_{(k)}(t_i) \sum_{j=1}^d \widehat \sigma_j^2 X_{(j)}(t_i).
\end{equation}
This leads to the growth parameter estimates
\begin{equation} \label{eqn:widehat_a}
    \widehat a_k =  \lambda \widehat \mu_k + \widehat \sigma_k^2 \widehat \mu_k - \widehat \rho_k - \widehat \phi_k, \qquad k=1,\dots,d.
\end{equation}
Summing over $k$ and using that $\sum_{k=1}^d \widehat\mu_k = 1$, $\sum_{k=1}^d \widehat\rho_k = \sum_{k=1}^d \widehat\sigma_k^2 \widehat\mu_k$, and $\sum_{k=1}^d \widehat\phi_k = 0$, we find that $\lambda$ equals $\sum_{k=1}^d \widehat a_k$ as it should. We conclude that the growth parameters are pinned down up to the choice of the market return parameter $\lambda$. Figure~\ref{fig:a_comparison} shows the resulting estimates for $\lambda \in \{0, 0.11, 0.2\}$. These choices are explained in Subsection~\ref{S_return_param} below, where we discuss how to choose $\lambda$.

\begin{figure}
    \centering
    \includegraphics[scale=0.5]{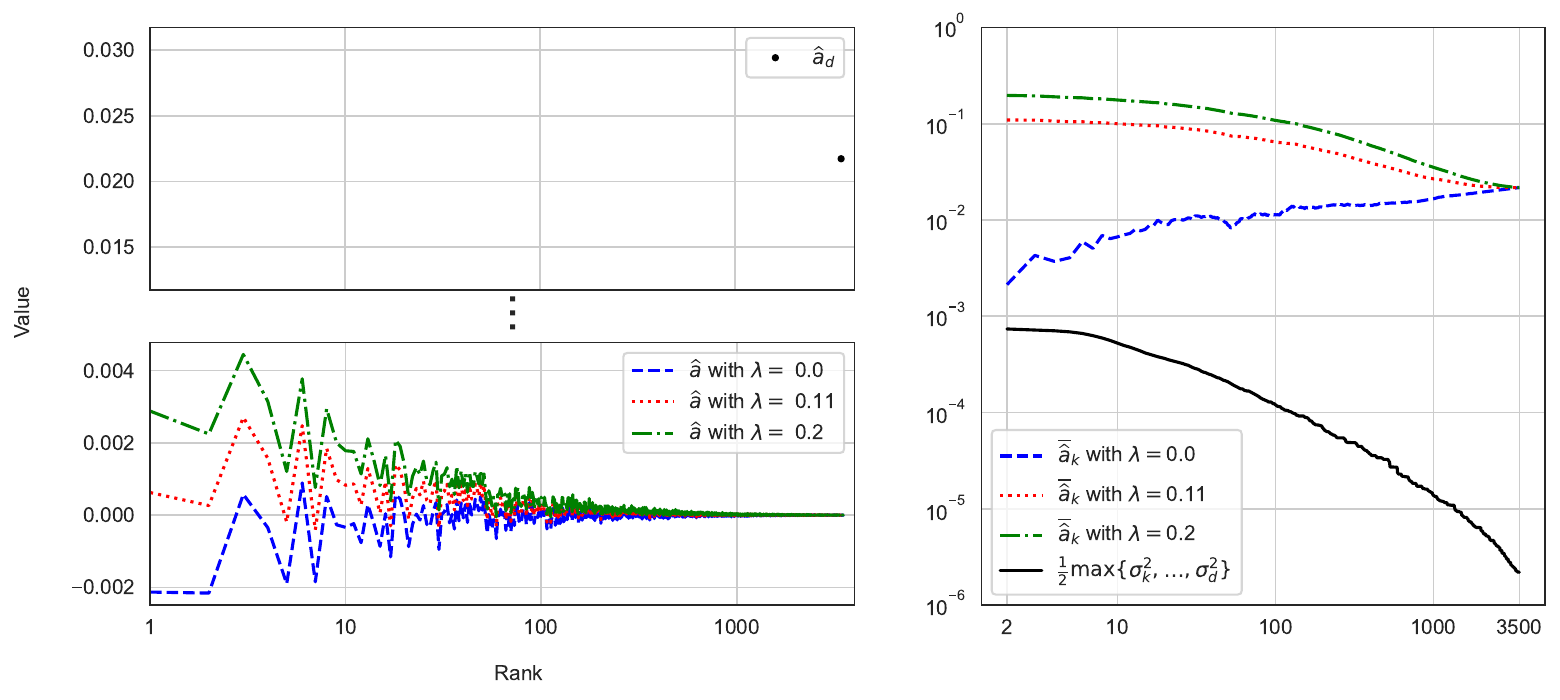}
    \caption{The left panel depicts $\widehat a$ for different choices of $\lambda$ while the right plot compares the tail sum parameter vector $\overline{\widehat a}_k$ to the tail maximum parameter $\frac{1}{2}\max\{\sigma_k^2,\dots,\sigma_d^2\}$ on a log-log scale. 
}
    \label{fig:a_comparison}
\end{figure}

For financially relevant choices of $\lambda$, the value of $\widehat a_d$ at the bottom rank $k=d$ turns out to be positive and significantly larger in magnitude than the values $\widehat a_k$ at higher ranks $k < d$. This is a result of the corresponding behavior of the collision parameter $\widehat\phi_d$; see Figure~\ref{fig:phi} and the discussion in Subsection~\ref{S_collision_calibration}. The large size of $\widehat a_d$ is akin to the behaviour observed in the Atlas model and, more generally, in first order models. It can be attributed to the effect of leakage in the market. Namely, because the universe of stocks is ever changing (due to IPO's, splits, mergers, etc.) the smallest stock can be viewed as not only representing its own growth potential, but that of all other equities, both present and future, that are not directly modeled. 



\begin{remark} \label{rem:numerical_instability}
The large size of $\widehat a_d$ causes numerical instability for the small market weights when simulating their evolution using an Euler scheme. This is because the smallest market weight can jump several hundred ranks in a single numerical step, even with a small step size. As such, for certain numerical results to come (which we will specify) we only use the top 1000 market weights, which are numerically more stable and, generally speaking, of greater financial interest. Nevertheless, developing more stable numerical schemes for high-dimensional rank-based models is an important direction for future research.
\end{remark}

\subsection{Choosing the market return parameter} \label{S_return_param}

We now have a procedure for calibrating all the parameters with the exception of the market return parameter $\lambda$. We next seek to understand how the choice of $\lambda$ affects the fit to the collisions and capital distribution curve. We thus fix a value of $\lambda$ and consider the rank volatility stabilized model \eqref{eqn:S_dynamics} with parameters
\[
a_k = \widehat a_k, \quad \sigma^2_k = \widehat\sigma^2_k,\quad k=1,\ldots,d,
\]
where $\widehat a_k$ and $\widehat\sigma^2_k$ are obtained from $\lambda$ and the data as described above. Specifically, the $\widehat\sigma_k^2$ are estimated from data as in Subsection~\ref{S_vol_calibration}. We get the $\widehat\phi_k$ as in Subsection~\ref{S_collision_calibration} and $\widehat\mu_k$, $\widehat\rho_k$ from \eqref{eqn:mu_hat_rho_hat}. These are all obtained from the data. Using these quantities together with $\lambda$, we get the $\widehat a_k$ from \eqref{eqn:widehat_a}. We assume here that the global well-posedness condition \eqref{eqn:parameter_condition} is satisfied, which is the case in practice if $\lambda$ is nonnegative; see the right panel of Figure~\ref{fig:a_comparison}.

Using the model defined in this way, we let $\phi_k$ be given by \eqref{eq_phi_k_def} and $\mu_k, \rho_k$ by \eqref{eqn:mu_k_rho_k}. Thus the $\phi_k$ are the theoretical collision parameters and $\mu_k, \rho_k$ the theoretical long-term average market weights and cross-moments, computed in the calibrated model. If the model were calibrated exactly, these would exactly equal $\widehat\phi_k$, $\widehat\mu_k$, $\widehat\rho_k$. Simultaneously fitting all these parameters exactly is however not possible in this model, and our aim is to understand how the discrepancies depend on the choice of $\lambda$.

With the model parameters currently under consideration, the theoretical relationship \eqref{eqn:phi_relationship} states that
\begin{equation} \label{eqn:collision_parameter_model_relation}
    0 = \widehat a_k - \lambda \mu_k - \widehat \sigma_k^2 \mu_k + \rho_k + \phi_k, \qquad k=1,\dots,d.
\end{equation}
Plugging in the defining relation \eqref{eqn:widehat_a} of the growth parameters $\widehat a_k$ and rearranging yields
\[
\phi_k - \widehat\phi_k = (\lambda + \widehat \sigma_k^2) \left( \mu_k - \widehat\mu_k \right) - \left(\rho_k - \widehat \rho_k \right).
\]
These quantities have different orders of magnitudes across rank, so we normalize by $\widehat \mu_k$ to obtain
\begin{equation} \label{eqn:rel_phi_diff}
    \frac{\phi_k - \widehat \phi_k}{\widehat \mu_k} = (\lambda + \widehat \sigma_k^2)\frac{\mu_k - \widehat \mu_k}{\widehat \mu_k} - \frac{\rho_k - \widehat \rho_k}{\widehat \mu_k}.
\end{equation}
Empirically the largest market weight $X_{(1)}(t)$ is of the order $10^{-1}$ and the largest volatility parameter $\sigma_1^2 = \widehat\sigma_1^2$ is of the order $10^{-2}$ or less. The quantities corresponding to lower ranks quickly become much smaller still. Thus, in view of \eqref{eqn:mu_k_rho_k} and \eqref{eqn:mu_hat_rho_hat}, $\rho_k$ and $\widehat\rho_k$ are several orders of magnitude smaller than $\widehat\mu_k$. This renders the second term on the right-hand side of \eqref{eqn:rel_phi_diff} negligible. If moreover $\lambda$ is of a larger order of magnitude than $\widehat\sigma_k^2$, which is the case for most (but not all) values of $\lambda$ we consider, then we deduce the approximate relationship 
\begin{equation} \label{eqn:approx_relationship}
    \frac{\phi_k - \widehat \phi_k}{\widehat \mu_k} \approx \lambda \left(\frac{\mu_k - \widehat \mu_k}{\widehat \mu_k}\right).
\end{equation}
This suggests that a larger value of $\lambda$ leads either to larger normalized collision errors $(\phi_k - \widehat \phi_k)/ \widehat \mu_k$, or to reduced normalized errors for the average capital distribution curve $(\mu_k - \widehat \mu_k)/\widehat \mu_k$. Empirically we see a combination of both effects. This is illustrated in Figure~\ref{fig:L2}, which displays the $L^2$ (sum-of-squares) error for both quantities as a function of $\lambda$, computed using simulated trajectories of the top 1,000 ranks. That is, for each value of $\lambda$ we plot
\[
\sum_{k=1}^{1000} \left( \frac{\phi_k - \widehat \phi_k}{\widehat \mu_k} \right)^2
\quad\text{and}\quad
\sum_{k=1}^{1000} \left( \frac{\mu_k - \widehat \mu_k}{\widehat \mu_k} \right)^2,
\]
where the theoretical values $\phi_k$ and $\mu_k$, which depend on $\lambda$, are obtained by Monte-Carlo; see Subsection~\ref{sec:cdc} for details. Recall that the empirical estimates $\widehat \phi_k$ and $\widehat\mu_k$ are obtained from data and do not depend on $\lambda$. During the in-sample period the annual growth rate of the entire market was $\lambda \approx 0.11$, indicated with a vertical dashed line in Figure~\ref{fig:L2}. This choice of $\lambda$ offers a balanced compromise between the collision and capital distribution curve fits. As such, this is our preferred choice of $\lambda$, though in the next two sections we further investigate the fit to both the capital distribution curve and collisions and discuss the benefits and drawbacks of choosing a different $\lambda$ parameter. 

\begin{figure}
    \centering
    \includegraphics[scale = 0.5]{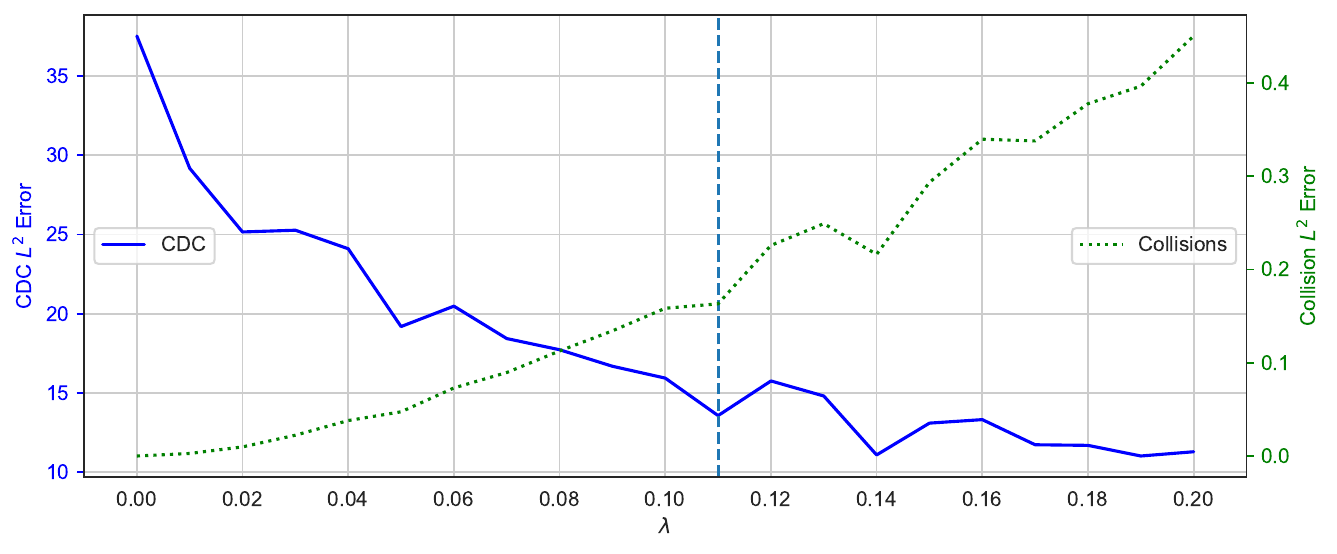}
    \caption{$L^2$ error of simulated capital distributions curve and collisions relative to estimated quantities. Obtained from the top 1000 market weights for 50 simulated samples at each $\lambda$ value. The dashed line is our preferred value of $\lambda = 0.11$.}
    \label{fig:L2}
\end{figure}

\section{Model performance} \label{sec:model_performance}

We now investigate how well the calibrated model fits the data. As described in Subsection~\ref{S_return_param}, we choose a value of the market return parameter $\lambda$ and instantiate the model with the parameters
\[
a_k = \widehat a_k, \quad \sigma^2_k = \widehat\sigma^2_k,\quad k=1,\ldots,d,
\]
where $\widehat a_k$ and $\widehat\sigma^2_k$ are obtained from $\lambda$ and the data using the procedure developed in Section~\ref{sec:calibration}. As discussed there, the data used for the calibration spans the 16-year period from Jan~1, 1990 to Dec~31, 2005. This is the in-sample period. For out-of-sample comparisons we use data from the subsequent 16-year period from Jan~1, 2006 to Dec~31, 2022.

Three different values of the market return parameter are considered, $\lambda \in \{0, 0.11, 0.2\}$, each producing a separate instance of the model. The value $\lambda = 0$ is a lower bound on any plausible average annual rate of return, while we view $\lambda = 0.2$ as a large value, though perhaps realistic in certain years. Our preferred choice $\lambda = 0.11$ is the average annual market rate of return during the in-sample period. We examine the fit to the capital distribution curve (Subsection~\ref{sec:cdc}), the fit to the empirically observed collisions (Subsection~\ref{S_collision_fit}), and the extent to which the model reproduces realistic market weight trajectories (Subsection~\ref{S_trajectories}).

\subsection{Capital distribution curve fit} \label{sec:cdc}

We compare the historical average capital distribution curve with the corresponding theoretical quantities $\mu_k$, $k=1,\ldots,d$, in the calibrated model; see \eqref{eqn:mu_k_rho_k}. Because no formula for $\mu_k$ in terms of the model parameters is available, we use a Monte-Carlo approximation. Specifically, we simulate 50 trajectories up to a large time horizon $T = 100$ years to reach stationarity, inspect the simulated ranked market weights at time $T$, and average across simulation runs to obtain approximations of $\mu_k$, $k=1,\ldots,d$. Next, the historical curve is computed in-sample and out-of-sample. The in-sample curve is given by $\widehat\mu_k$, $k=1,\ldots,d$, in \eqref{eqn:mu_hat_rho_hat} using the same sample that was used to calibrate the model. The out-of-sample curve is computed in the same way, but now using data from the out-of-sample period.

Figure~\ref{fig:cdc} shows the in-sample and out-of-sample historical curves, along with the model generated curves for $\lambda \in \{0, 0.11, 0.2\}$. As suggested by the analysis in Subsection~\ref{S_return_param}, a larger value of $\lambda$ leads to a better fit to the capital distribution curve. Indeed, $\lambda = 0$ performs quite poorly, while $\lambda = 0.11$ and $\lambda = 0.2$ both lead to better fits. The fit is generally best for the middle part of the curve, with moderate deviations in the large stocks. This is expected, since the middle part of the curve is known to be the most stable, while the large stocks have more idiosyncratic fluctuations. Lastly, the small stocks are not well fit. This may be due to model error, but it may also be caused in part by the numerical instability in the simulation of the small stocks discussed in Remark~\ref{rem:numerical_instability}, producing inaccurate values of $\mu_k$ for large $k$. It is not clear which effect dominates.
Nonetheless, with $\lambda = 0.11$ and $\lambda = 0.2$, the top 1,000 weights for the simulated capital distribution curves match well with the historical curves, both in-sample and out-of-sample.

\begin{figure}
    \centering
    \includegraphics[scale=0.5]{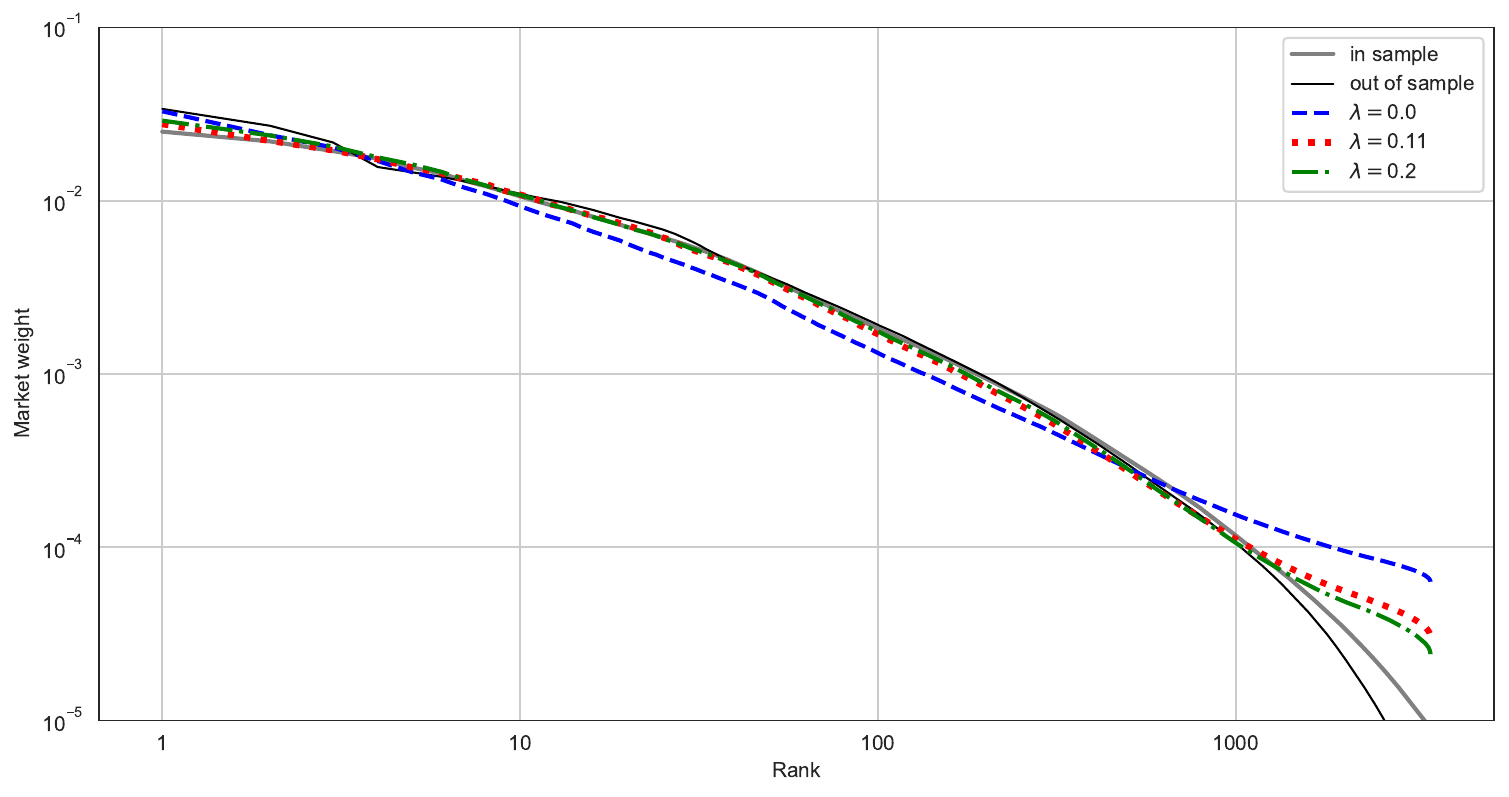}
    \caption{Simulated and historical capital distribution curves.}
    \label{fig:cdc}
\end{figure}

\subsection{Collision fit} \label{S_collision_fit}
We now examine how well the theoretical collision parameters $\phi_k$ (see \eqref{eq_phi_k_def}) in the calibrated model match the empirically estimated values $\widehat\phi_k$. Again no formula for $\phi_k$ in terms of the model parameters is available, so we proceed by Monte-Carlo. Specifically, we use the Monte-Carlo approximations of $\mu_k$ and $\rho_k$ obtained in Subsection~\ref{sec:cdc} together with the relationship \eqref{eqn:collision_parameter_model_relation} to solve for $\phi_k$.
This works well for the large stocks, but becomes inaccurate for smaller stocks due to the numerical instability of the simulation for these stocks discussed in Remark~\ref{rem:numerical_instability}. For this reason we limit the comparison to the top 1,000 ranks.

The normalized in-sample collision errors $(\phi_k - \widehat\phi_k)/\widehat\mu_k$ given by \eqref{eqn:rel_phi_diff} are plotted in Figure~\ref{fig:collisions} for the three instances of the model that we consider, $\lambda \in \{0, 0.11, 0.2\}$.
We see that $\lambda = 0$ leads to the best fit, with a near-zero error across all of the plotted ranks. This is consistent with \eqref{eqn:approx_relationship}, which also predicts that the fit becomes worse as $\lambda$ increases. This is borne out in Figure~\ref{fig:collisions}. Additionally, we see that, generally, the error is smaller for the small weights, increases for the middle ranks and is more severe for the largest stocks. Empirically, collisions are more frequent for the small capitalization stocks. For this reason we expect the estimate $\widehat \phi_k$ to be more reliable for large $k$, although confirming this would require an analysis of the statistical properties of the estimator \eqref{eqn:hat_bar_phi}, an interesting topic for future research. Interestingly, it is precisely for larger $k$ that the calibrated models tend to have a better fit.

The normalized out-of-sample collision errors are shown in Figure~\ref{fig:collisions_out_sample}. As there is no analog of \eqref{eqn:rel_phi_diff} for the out-of-sample errors, we compute them by adding the difference between the in- and out-of-sample collision parameters to the in-sample errors. We infer from Figure~\ref{fig:collisions_out_sample} that the normalized errors are larger out-of-sample than in-sample. The difference in the appearance of Figure~\ref{fig:collisions} and Figure~\ref{fig:collisions_out_sample} may be due in part to overfitting to the in-sample data, but is likely also affected by noise in the estimation of the collision parameters. A further analysis of this point, including regularization procedures to reduce estimation noise, is an interesting question that is left for future work.

\begin{figure}
    \centering
    \includegraphics[scale = 0.5]{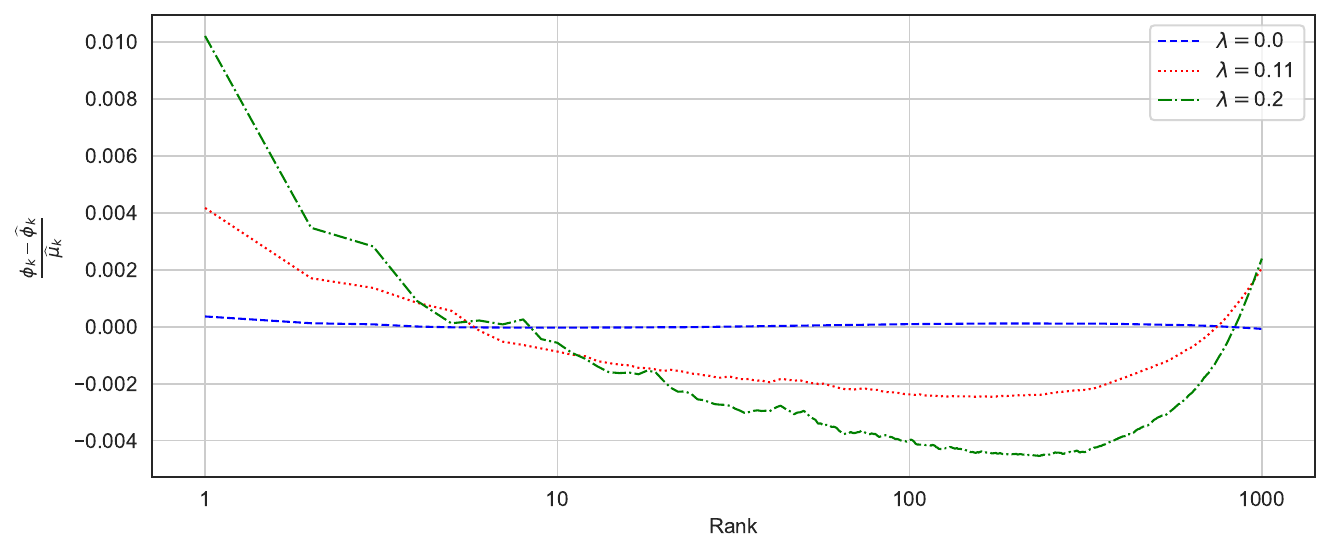}
    \caption{The normalized in-sample collision errors $(\phi_k - \widehat\phi_k)/\widehat\mu_k$ as a function of rank for different choices of $\lambda$.}
    \label{fig:collisions}
\end{figure}

\begin{figure}
    \centering
    \includegraphics[scale = 0.5]{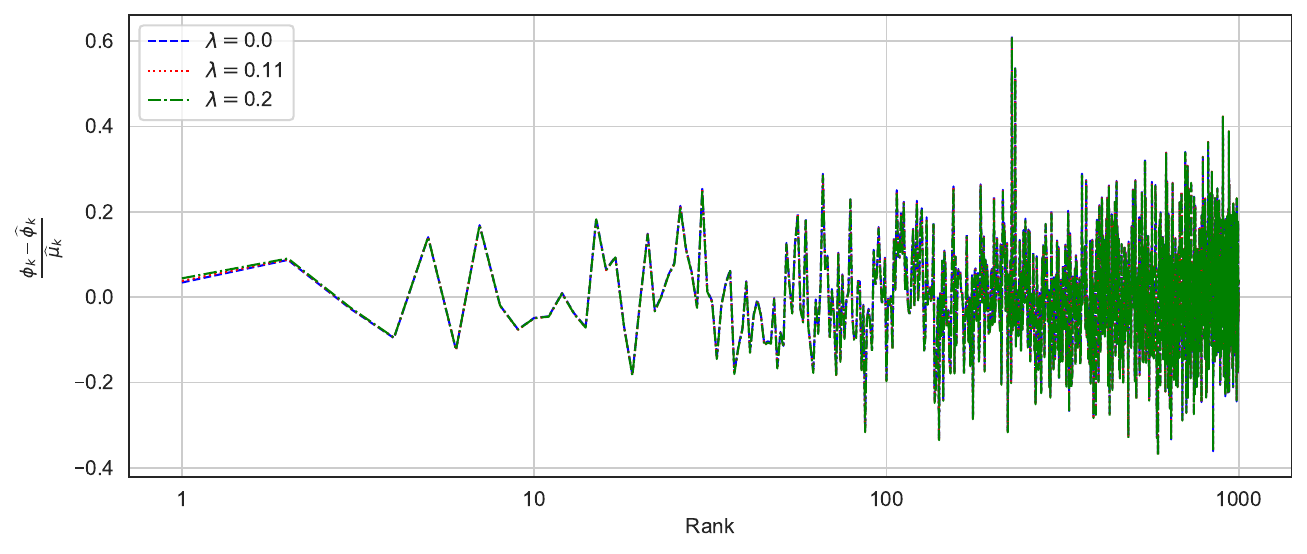}
    \caption{The normalized out-of-sample collision errors as a function of rank for different choices of $\lambda$.}
    \label{fig:collisions_out_sample}
\end{figure}

\subsection{Simulated trajectories} \label{S_trajectories}

Here we compare simulated trajectories from the models with different choices of $\lambda$. Figure~\ref{fig:trajectories} plots trajectories of $X_{(k)}$ for $k = 1,10,50,100,500$ and $1000$ produced from the calibrated model and compared with the historical trajectories both in-sample and out-of-sample. For the in-sample and out-of-sample periods, the models were initiated at the historical values on Jan 2, 1990 and Jan 3, 2006 respectively and simulated for the length of the respective period. At higher ranks $k=1,10,50,100$, the model trajectories are not immediately distinguishable to the eye from the historical trajectory, although they do, of course, have different capital distribution curve and collision profiles as Figures~\ref{fig:cdc} and \ref{fig:collisions} show. This is true both in-sample and out-of-sample. At lower ranks $k=500, 1000$, the historical trajectory is more clearly distinguishable. Furthermore, out-of-sample the model and historical trajectories trend toward different values over time, in contrast to the in-sample behaviour. Although the general shape and structure of the capital distribution curve is stable over time, the empirical curves for the in-sample and out-of-sample periods do differ quantitatively, which causes this discrepancy. Nevertheless, the simulated trajectories are relatively inexpensive to produce and can serve as helpful synthetic data for practitioners, academics and regulators alike.


\begin{figure}
    \centering
    \includegraphics[scale=0.5]{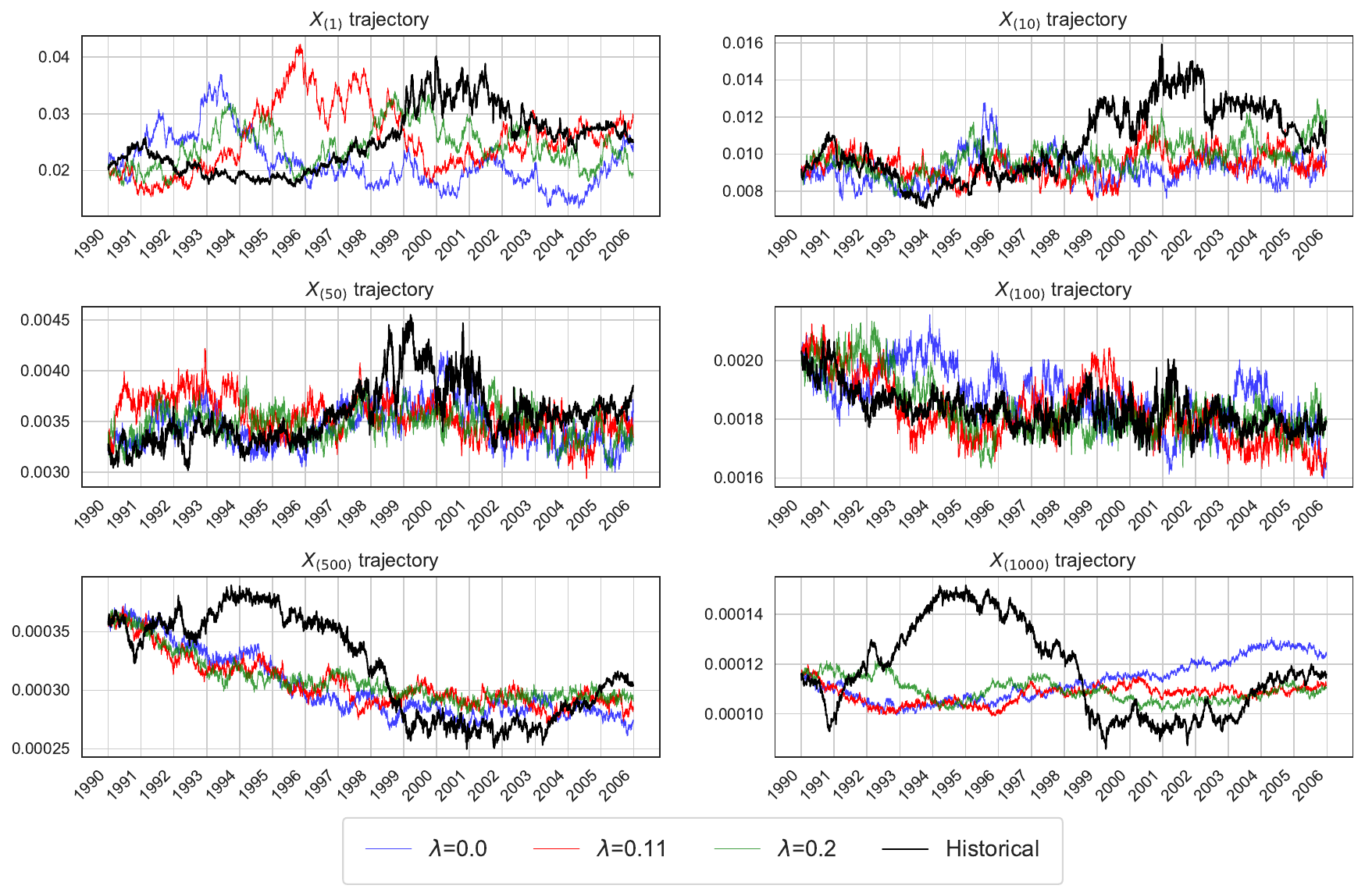}
    \includegraphics[scale=0.5]{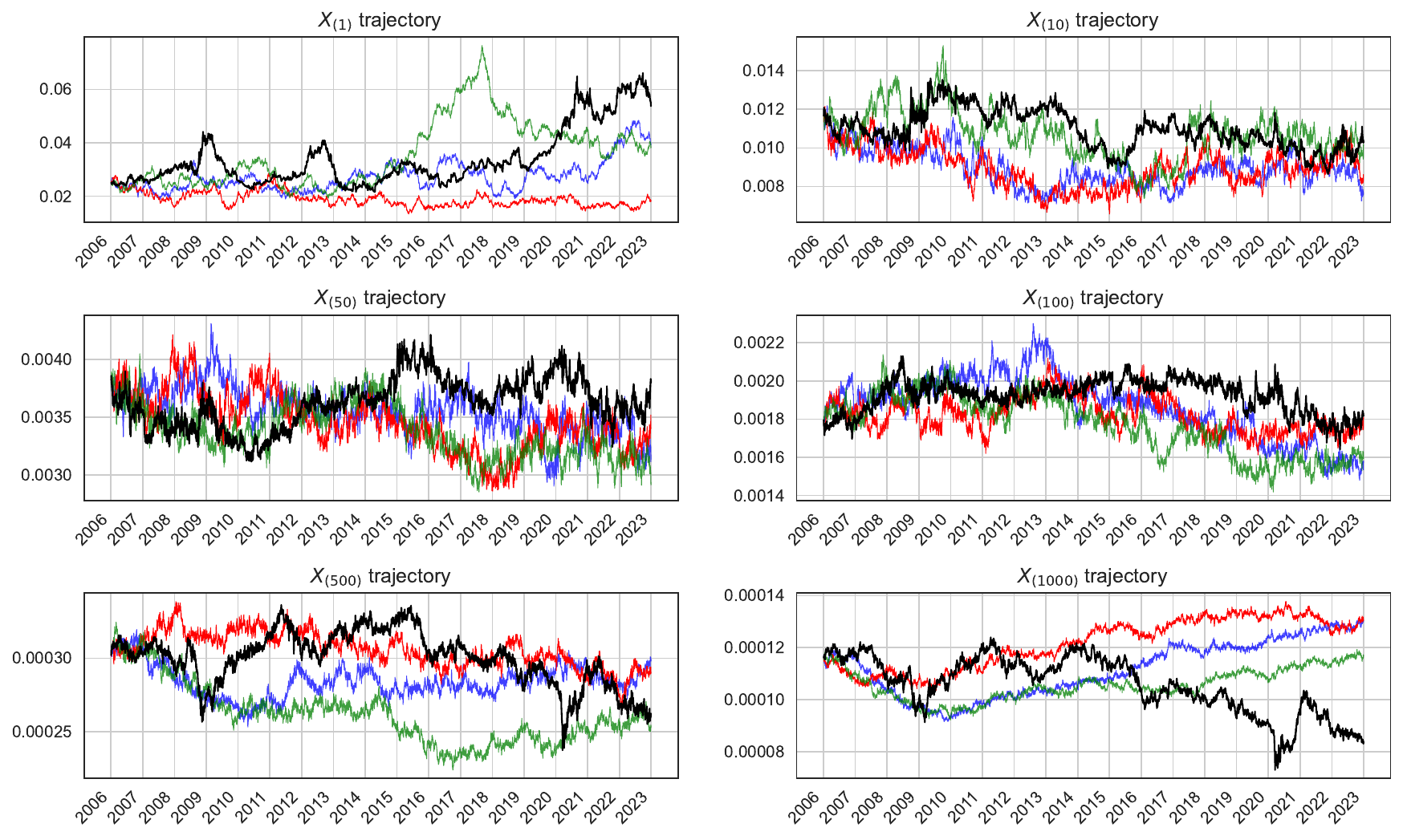}
 
    \caption{Sample path trajectories of the calibrated model with different choices of $\lambda$ together with the historical trajectories in-sample (top panels) and out-of-sample (bottom panels).}
    \label{fig:trajectories}
\end{figure}

\subsection{Summary} \label{sec:model_performance_summary}

Figures~\ref{fig:cdc} and \ref{fig:collisions} clearly show that a small value of $\lambda$ provides a good fit to the collisions and poor fit to the capital distribution curve, while the opposite is true for large values of $\lambda$. Our preferred choice $\lambda = 0.11$ balances these effects while also matching the in-sample market rate of return. However, a user of the model who requires a tighter fit to the capital distribution curve may wish to use a larger value of $\lambda$, while one who is more interested in matching the market turnover as measured by collisions may prefer a smaller value of $\lambda$. Moreover, Figure~\ref{fig:trajectories} indicates that the behaviour of individual trajectories is not strongly dependent on the choice of $\lambda$. Nonetheless, if the model is used to generate synthetic data, it may be useful to use a range of values of $\lambda$ for robustness and in order to capture different growth regimes.

The ability for the one final parameter $\lambda$ to ensure a good fit for the collisions, capital distribution curve and market rate of return is remarkable. Indeed, the capital distribution curve 
is a high dimensional object, and all but one parameter were calibrated using criteria not directly related to the curve. Our findings support the analysis of Fernholz \cite{fernholz2002stochastic} done with first order models and provides additional evidence that there is an inherent low-dimensional relationship between rank-based volatility, turnover in the market, the rate of return of the entire market and the distribution of capital.

\section{A discussion of portfolio performance}  \label{sec:portfolio_main}

In this section we analyze the behavior of certain portfolios in the rank volatility stabilized model. Section~\ref{sec:portfolio_def} introduces trading strategies, wealth processes and functional generation of portfolios.   In Section~\ref{sec:growth_optimal} we derive \emph{explicit} expressions for the growth-optimal portfolio in both the closed and open market (precise definitions below). As is often the case, growth-optimal portfolios are highly leveraged and the case studied here is no exception. The open market setup, however, is able to reduce some of the leverage but does not eliminate it entirely. Additionally, growth optimal portfolios provide insight on the mechanics of growth maximization and can serve as a useful starting point when incorporating other features into portfolio selection, such as leverage constraints, risk considerations and structural properties. Although detailed implementations of such extensions are outside the scope of this paper, we do discuss these points briefly in Section~\ref{sec:portfolio_discussion}. 

Secondly, as mentioned in the introduction, relative arbitrage is achievable in rank-volatility stabilized markets. We establish this in Section~\ref{sec:rel_arb} and show that the \emph{diversity weighted portfolio}, which is a long-only functionally generated portfolio, achieves relative arbitrage over a sufficiently long time horizon. The diversity portfolio has empirically been shown to outperform the market portfolio, especially without the presence of transaction costs. Indeed, the recent empirical work \cite{ruf2020impact} reports outperformance of the market portfolio by the diversity portfolio with exponent $p=0.8$ over the period from 1962-2016 with $d=100$ stocks and even with proportional transaction costs of up to 0.5\%. Although possible in the model, we do not claim that probability one outperformance opportunities can be achieved in real equity markets. Indeed, models are not perfect representations of reality and we do not consider any trading frictions here. Nevertheless, it is striking that a model which is able to fit several desirable features of equity markets over long time horizons admits such outperformance.
\subsection{Trading strategies and wealth processes} \label{sec:portfolio_def}

We first review trading strategies and wealth processes. To this end we work on a filtered probability space $(\Omega,\Fcal, (\Fcal_t)_{t \geq 0 }, \P)$ satisfying the usual assumptions and supporting the process $S$ constructed in Theorem~\ref{thm:existence}. Furthermore, we assume that the parameter condition \eqref{eqn:parameter_condition} holds so that we have global existence for $S$ and all assets have strictly positive market capitalizations. An \emph{admissible portfolio} $\pi$ is then a $(\log S)$-integrable process whose components sum to one,
\begin{equation} \label{eqn:portfolio_constraint}
\pi_1(t) + \dots + \pi_d(t) = 1, \qquad t\geq 0.
\end{equation}
The portfolio weight $\pi_i(t)$ represents the proportion of wealth invested in asset $i$. Condition \eqref{eqn:portfolio_constraint} states that the strategy is fully invested; in particular, there is no bank account in this model. We denote the set of all portfolios as $\Pi$ and we say a portfolio $\pi$ is \emph{long-only} if $\pi_i(t) \geq 0$ for every $i=1,\dots,d$ and $t \geq 0$. We always take the initial wealth to be one for simplicity.
 The corresponding wealth process $W^\pi$ is then given by
\begin{equation} \label{eqn:wealth_process_def}
    \frac{dW^\pi(t)}{W^\pi(t)} = \sum_{i=1}^d \pi_i(t) \frac{dS_i(t)}{S_i(t)}, \quad W^\pi(0) = 1.
\end{equation}
Equivalently, we can write $W^\pi$ as a stochastic exponential,
\[W^\pi (T)= \exp\left(\sum_{i=1}^d\int_0^T \frac{\pi_i(t)}{S_i(t)}\, dS_i(t) - \frac{1}{2}\sum_{i,j=1}^d\int_0^T\frac{\pi_i(t)}{S_i(t)}\frac{\pi_j(t)}{S_j(t)}\, d\langle S_i,S_j \rangle (t)\right), \quad T \ge 0.\]
 This representation shows that the wealth process of any admissible portfolio stays strictly positive. In the analysis to come, the logarithmic wealth will play an important role and, as such, we canonically write
\[\log W^\pi(t) = A^\pi(t) + M^\pi(t)\] for the semimartingale decomposition of the log wealth process, where $A^\pi$ is the finite variation part and $M^\pi$ is the local martingale part.

Next we introduce the \emph{market portfolio} $\pi^\mathcal{M}$, which has portfolio weights
\[\pi^{\mathcal{M}}_i(t) := X_i(t), \qquad i =1,\dots,d.\]
To simplify the notation we write $W^{\mathcal{M}}$ for $W^{\pi^{\mathcal{M}}}$ and note from \eqref{eqn:wealth_process_def} that $W^{\mathcal{M}}(t) = \overline S(t) / \overline S(0)$. In the sequel we will use the market portfolio as a benchmark portfolio, which we compare the performance of other portfolios to. Thus, for a portfolio $\pi \in \Pi$ we define the \emph{relative wealth process} of $\pi$ to be 
\[V^\pi(t) := \frac{W^\pi(t)}{W^\mathcal{M}(t)}.\]
Note that, in particular, we have $V^\pi(0) = 1$ for any $\pi \in \Pi$.

Finally, we review the notion of (multiplicatively) functionally generated portfolios. If for some $G:\R^d \to (0,\infty)$ the relative wealth process of a portfolio $\pi \in \Pi$ has the representation
\begin{equation} \label{eqn:master_formula}
    \log V^\pi(T) = \log G(X(T)) - \log G(X(0)) + \Gamma(T)
\end{equation}
for every $T \geq 0$ and some process $\Gamma$ of finite variation with $\Gamma(0) = 0$, then we say that $\pi$ is \emph{functionally generated} by $G$ with \emph{drift process} $\Gamma$. Two important cases that will appear in the sequel are when $G$ is $C^2$, and when it has the representation $G(x) = F(x_{()})$ for some $C^2$ function $F$, where we recall that for any $x \in \R^d$, $x_{()} = (x_{(1)},\ldots,x_{(d)})$ is the vector of its descending order statistics. For more general notions of functionally generated portfolios we refer the reader to \cite{karatzas2017trading}.

\paragraph*{Case 1.} If $G$ is itself $C^2$, the portfolio
\[\pi_i(t) = X_i(t)\left(\partial_i \log G(X(t)) + 1 - \sum_{j=1}^d X_j(t)\partial_j \log G(X(t))\right), \qquad i=1,\dots,d,\] is functionally generated by $G$ and its drift process is 
\[\Gamma(T) = -\int_0^T \sum_{i,j=1}^d \frac{\partial_{ij}G(X(t))}{2G(X(t))} d\langle X_i,X_j \rangle(t).\]

\paragraph*{Case 2.} If $G = F(x_{()})$ for some $C^2$ function $F$, the portfolio characterized by its investment in the ranked assets,
\begin{equation} \label{eqn:fg_rank} \pi_{n_k(t)}(t) = X_{(k)}(t)\left(\partial_k \log F(X_{()}(t)) + 1 - \sum_{k=1}^d X_{(k)}(t)\partial_k \log F(X_{()}(t)) \right),
\end{equation}
is functionally generated by $G$ with drift process
\begin{equation} \label{eqn:fg_rank_drift}
    \Gamma(T) = -\int_0^T \sum_{j,k=1}^d \frac{\partial_{kl}F(X_{()}(t))}{2F(X_{()}(t))} d\langle X_{(j)},X_{(k)} \rangle(t) - \sum_{k=1}^d \int_0^T\partial_k \log F(X_{()}(t)) d\Phi_k(t),
\end{equation} 
where the $\Phi_k$ are the boundary reflection processes in \eqref{eqn:Phi}.

\label{sec:portfolios}
\subsection{Growth-optimal portfolios} \label{sec:growth_optimal}

In this section we study growth-optimal portfolios. Given a collection $\Xi \subset \Pi$ of admissible portfolios we say that a portfolio $\pi^* \in \Xi$ is \emph{growth-optimal} in the class $\Xi$ if $A^{\pi^*} - A^\pi$ is a nondecreasing process for every $\pi \in \Xi$. The existence of a growth-optimal portfolio is known to be equivalent to market viability and implies the absence of arbitrage of the first kind; we refer the reader to the recent monograph \cite{karatzas2021portfolio} for an in-depth analysis of the relationship between these notions.  We now consider two choices for $\Xi$, corresponding to so-called \emph{closed} and \emph{open} markets.

\paragraph*{Closed market.}

Here we look at the case $\Xi = \Pi$, the full set of admissible portfolios. It\^o's formula implies that for any portfolio $\pi$ we have
\begin{equation} \label{eqn:A_pi}
    A^\pi (T) = \int_0^T \sum_{i=1}^d\left( \frac{a_{r_i(t)}}{X_i(t)}\pi_i(t) - \frac{\sigma^2_{r_i(t)}}{2X_i(t)}\pi_i^2(t)\right)\, dt.
\end{equation}
Thus to obtain the growth-optimal portfolio we simply maximize the integrand pointwise over $\pi$ subject to the constraint \eqref{eqn:portfolio_constraint}. This is a quadratic program with one linear constraint, which has the explicit solution
\[\pi^\mathcal{C}_i(t) = \frac{a_{r_i(t)}}{\sigma^2_{r_i(t)}} - \left(\sum_{j=1}^d \frac{a_j}{\sigma_{j}^2} - 1\right)\frac{X_i(t)/\sigma^2_{r_i(t)}}{\sum_{j=1}^d X_{(j)}(t)/\sigma_j^2}, \qquad i=1,\dots,d.\]
Here the superscript $\Ccal$ indicates optimality in the closed market. A more convenient representation for this portfolio is in terms of its investment in the \emph{ranked} assets,
\[\pi_{n_k(t)}^\Ccal(t) = \frac{a_k}{\sigma_k^2} - \left(\sum_{j=1}^d \frac{a_j}{\sigma_{j}^2} - 1\right)\frac{X_{(k)}(t)/\sigma^2_{k}}{\sum_{j=1}^d X_{(j)}(t)/\sigma_j^2}, \qquad k=1,\dots,d.\]
One way to interpret this portfolio is that the investor seeks to invest the baseline proportion $a_k/\sigma_k^2$ into the $k^{\text{th}}$ largest asset, independent of the asset's market weight. This quantity is downward adjusted by the common factor $\sum_{j=1}^d a_j/\sigma_j^2-1$ weighted against the inverse instantaneous log covariation of the $k^{\text{th}}$ largest asset, 
\[\frac{1}{d\langle \log S_{(k)} \rangle(t)} = \frac{X_{(k)}(t)}{\sigma^2_{k}}\]
relative to the cumulative inverse covariations $\sum_{j=1}^d X_{(j)}(t)/\sigma_j^2$.
 Lastly, we note that this portfolio is functionally generated by 
\[G^{\Ccal}(x) :=   \left(\prod_{k=1}^d x_{(k)}^{a_k / \sigma_k^2}\right)\Bigg/ \left(\sum_{k=1}^d \frac{x_{(k)}}{\sigma_k^2}\right)^{\sum_{k=1}^d a_k/\sigma_{k}^2 - 1}.\]


\paragraph*{Open market.} 
The classical closed market setup studied above allows for investment in all of the stocks in the model. However, as noted in the introduction, the constituents making up the $d$ stocks we consider change over time. As such, the closed market setup does not account for turnover in equity markets, which is an important and prevalent feature, especially over long time horizons. The recently proposed framework of \emph{open markets} \cite{fernholz2018numeraire,karatzas2020open}, see also \cite{itkin2021open}, aims to address this issue by only allowing the investor to invest in assets which, at any given time, occupy the top $N$ ranks for a given $N < d$.

In this setup the set of assets available for investment changes over time, as different equities enter and exit the top $N$. Moreover, many investors implicitly or explicitly restrict their investment analysis -- and hence their portfolios -- to a subset of the market. In many cases this subset consists of larger companies, which the open market setup emulates. A  canonical example is an investor who restricts to trading in stocks that make up the S\&P 500 (whose constituents change over time), which the open market with $N = 500$ can serve as a proxy for.


Mathematically, investment in the open market of size $N$ is enforced through the admissible portfolio set
\[\Xi = \{\pi \in \Pi: \pi_{n_k(t)}(t) = 0 \text{ for } k> N,\,  t\geq 0\}.\]  We then see from \eqref{eqn:A_pi} that the growth-optimal portfolio in the open market is the pointwise maximizer of 
\[\sum_{k=1}^N \left(\frac{a_k}{X_{(k)}(t)}\pi_{n_k(t)}(t)- \frac{\sigma_{k}^2}{2X_{(k)}(t)}\pi_{n_k(t)}^2(t)\right)\]
subject to the constraint $\pi_{n_1(t)}(t) + \dots + \pi_{n_N(t)}(t) = 1$. This is yet again a linearly constrained quadratic program, and it is straightforward to establish that the solution is
\[\pi^{\Ocal}_{n_k(t)}(t) := \begin{cases} \displaystyle \frac{a_k}{\sigma_k^2} - \left(\sum_{j=1}^N \frac{a_j}{\sigma_{j}^2} - 1\right)\frac{X_{(k)}(t)/\sigma_k^2}{{\sum_{j=1}^N X_{(j)}(t)/\sigma_j^2}} & k=1,\dots,N, \\
0, & \text{otherwise}.
\end{cases}\]
Here the superscript $\mathcal{O}$ indicates optimality in the open market. The portfolio $\pi^{\Ocal}$ has the same structure as $\pi^{\Ccal}$ except it only invests in the largest $N$ securities. The baseline proportion $a_k/\sigma_k^2$ is still present and the allocation of the remainder term is still determined by the inverse of the instantaneous log returns $X_{(k)}(t)/\sigma_k^2$. As in the closed market case, it is easy to verify that this portfolio is functionally generated by
\[G^\Ocal(x)  := \left(\prod_{k=1}^N x_{(k)}^{a_k/\sigma_k^2}\right)\Bigg/ \left(\sum_{k=1}^N \frac{x_{(k)}}{\sigma_k^2}\right)^{\sum_{k=1}^N a_k/\sigma_{k}^2 - 1}.\]

\begin{remark}
A consequence of $\pi^\Ccal$ and $\pi^\Ocal$ being functionally generated is that these strategies have guarantees on their \emph{asymptotic} growth rate under a wider class of market dynamics. Indeed, their asymptotic growth rates remain unchanged under any model where the ranked market weight process $X_{()}$ has the same (i) volatility structure and (ii) invariant measure as in the rank volatility stabilized model. Furthermore, the results of \cite{kardaras2021ergodic,itkin2022ergodic} suggest that these portfolios may have a \emph{robust asymptotic growth-optimality} property over a class of models exhibiting the features (i) and (ii) above and, moreover, that the rank volatility stabilized model serves as a worst-case model in this class. Although the cited papers do not handle the case of non-smooth generating functions, we conjecture that the techniques in \cite{itkin2021open} -- which establishes robust growth optimality in a \emph{relaxed} open market for a nonsmooth rank-based generating function -- may be applicable as the generating functions $G^\Ccal$ and $G^\Ocal$ are of a similar form to the one considered in that paper. 
\end{remark}


\begin{remark} \label{rem:lambda_indep}
    At the average capital distribution curve,  when $X_{()}(t) = \widehat \mu$, both portfolios $\pi^\Ccal(t)$ and $\pi^\Ocal(t)$ yield the same value for the calibrated model regardless of the choice of $\lambda$ in \eqref{eqn:widehat_a}.
\end{remark}
\subsection{Relative arbitrage} \label{sec:rel_arb}
The previous section studied portfolios that maximize growth. While being optimal in this sense, such portfolios can be over-leveraged and risky. In this section we instead restrict our attention to long-only functionally generated portfolios and demonstrate that, in the model, such portfolios can outperform the market portfolio with probability one.

\begin{defn}[Relative arbitrage]
    A \emph{relative arbitrage} with respect to the market portfolio on a given time horizon $[0,T^*]$ is a portfolio $\pi \in \Pi$ such that almost surely
\begin{equation}\label{eqn:rel_arb}
    V^\pi(T) > 1  \text{ for all } T \geq T^*.
\end{equation}
\end{defn}
Relative arbitrage is well studied in the literature \cite{fernholz2002stochastic,fernholz2005relative,larsson2021relative} and many sufficient conditions have been derived for its existence. In particular, the paper \cite{banner2008short} establishes instantaneous relative arbitrage in the volatility stabilized market. We now demonstrate that relative arbitrage opportunities exist in the rank volatility stabilized models considered here and show that it can be achieved with the diversity weighted portfolio.

To this end we define the function
\begin{equation} \label{eqn:diversity_function}
D_p(x) = \left(\sum_{i=1}^d x_i^p\right)^{1/p}
\end{equation}
for any $p \in (0,1)$, which generates the diversity weighted portfolio
\begin{equation} \label{eqn:diversity_weights}
    \pi^{D_p}_i(t) := \frac{X_i^p(t)}{\sum_{j=1}^d X_j^p(t)}, \quad i=1,\dots,d.
\end{equation}
Its drift process is $\Gamma(T)= (1-p) \int_0^T \gamma^*(t)\, dt$, where $\gamma^*$ is the \emph{excess growth-rate} given
in this model by
\begin{equation} \label{eqn:gamma*_equality}
\gamma^*(t)  = \frac{\sum_{k=1}^d X_{(k)}^{p-1}(t)\sigma_k^2}{2\sum_{j=1}^d X_j^p(t)} - \frac{\sum_{k=1}^d X_{(k)}^{2p-1}\sigma_k^2}{2\left(\sum_{j=1}^d X_j^p(t)\right)^2}.
\end{equation}
We look to bound the drift process from below, and start by noting that the second term in the expression for $\gamma^*(t)$ is bounded above by $\frac{1}{2}X_{j^*(t)}(t)^{p-1}\sigma_{j^*(t)}^2 /\sum_{j=1}^d X_{(j)}^p(t)$ where $j^*(t)$ is the index that achieves $\max_{j}\{X_{j}(t)^{p-1}\sigma_j^2\}$. We conclude that
\[
\gamma^*(t) \ge \frac{\sum_{k=1}^d X_{(k)}^{p-1}(t)\sigma_k^2 - X_{j^*(t)}(t)^{p-1}\sigma_{j^*(t)}^2}{2\sum_{j=1}^d X_{(j)}^p(t)} =
\frac{\sum_{k\ne j^*(t)} X_{(k)}^{p-1}(t)\sigma_k^2}{2\sum_{j=1}^d X_{(j)}^p(t)}.
\]
Because $X_{(k)} < 1$ and $p < 1$, the numerator is bounded from below by $\sum_{k \ne j^*(t)} \sigma_k^2 \geq \sum_{k=2}^d \sigma^2_{(k)}$, and the denominator from above by $2 d^{1-p}$. We thus obtain
\begin{equation} \label{eqn:gamma*bound}
    \gamma^*(t) \geq \frac{\sum_{k=2}^d \sigma^2_{(k)}}{2d^{1-p}}.
\end{equation}
Consequently we see from \eqref{eqn:master_formula} and these bounds that 
\begin{equation} \label{eqn:rel_arb_diversity}
\begin{split}
    \log V^{\pi^{D_p}}(T) &  = \log \left(\frac{D_p(X(T))}{D_p(X(0))}\right) + (1-p)\int_0^T \gamma^*(t)\, dt 
    \geq -\log D_p(X(0)) +  \frac{(1-p)\sum_{k=2}^d \sigma_{(k)}^2}{2d^{1-p}}T,
\end{split}
\end{equation} 
where we also used the fact that $D_p(x) \geq 1$ for every $x \in \Delta^{d-1}_+$. Hence for
\[T > T^*:= \frac{2\log D_p(X(0))d^{1-p}}{(1-p)\sum_{k=2}^d \sigma_{(k)}^2}, \]
the diversity-$p$ portfolio admits relative arbitrage on the time horizon $[0,T]$. Using the calibrated parameter values, $p = 0.8$ and setting $X(0)$ to be the weights on Jan 2, 1990 we have that $T^* \approx 600$. Hence, although theoretically relative arbitrage exists in this model, the period of its realization, even without trading frictions, is far beyond the time horizon of active market participants. It should be noted, however, that the empirical average value of the excess growth-rate $\gamma^*$ computed by evaluating the right hand side of \eqref{eqn:gamma*_equality} along the historical trajectory and averaging over the in sample period is $0.064$. This is similar to the estimated stationary average from the calibrated model with $\lambda = 0.11$, given by $0.061$.
As such, the outperformance time along a typical trajectory is shorter than the model guaranteed time $T^*$.

\subsection{Portfolio allocations in the calibrated model} \label{sec:portfolio_discussion}

Here we examine typical portfolio weights from the strategies discussed in previous subsections for the calibrated model. Figure~\ref{fig:portfolio_weights} depicts the top $N=100$ portfolios weights when $X_{()}(t) = \widehat \mu$; that is, the market weight vector is at the average capital distribution curve level of the in-sample period. We recall Remark~\ref{rem:lambda_indep}, which notes that for such a market weight configuration the portfolio weights are independent of the choice of $\lambda$. The portfolios $\pi^{\Ccal}$ and $\pi^{D_{0.8}}$ specify nonzero investments in assets with rank $k$ for $k > N$, which are not plotted here, while $\pi^\Ocal$ does not by the open market restriction. We see that the growth-optimal portfolio in the closed market exhibits extreme leverage far beyond any realistically achievable allocation. It prescribes short positions exceeding twice the investor's total wealth in \emph{each} of the 100 largest assets. These short positions are then used to finance an extremely large position in the smallest asset due to its large growth parameter $\widehat a_d$. The portfolio allocations of the growth-optimal portfolio in the open market are not as extreme, but still suffer from infeasible positions. Indeed, the cumulative wealth in the twenty largest stocks is approximately negative four times ones wealth, while the cumulative investment in the top fifty assets reaches positive four times ones wealth. In contrast, the diversity portfolio is a long-only strategy with a reasonable allocation rule that even unsophisticated investors could implement.

Growth-optimal portfolios specifying extreme leverage and highly risky positions is a well-documented deficiency of the growth-optimality criterion (see \cite{samuelson1979we} or \cite[Chapter~6.9]{isichenko2021quantitative}). Our recent theoretical study, however, suggested that open markets can help resolve this issue. Indeed, growth-optimal open market portfolios derived in \cite{itkin2021open} for the rank Jacobi models (obtained from rank volatility stabilized models by fixing a constant volatility vector) allowed for admissible parameter specifications leading to \emph{long-only} growth-optimal portfolios in the open market. The empirical analysis here shows that the open market framework does indeed reduce leverage present in the growth optimal portfolio, but, for empirically calibrated parameters, not to a sufficient extent to be directly tradeable.

To conclude this section we refer to several methods, compatible with the rank-based investing structure explored in this paper, to modify the set of admissible portfolios or the optimality criterion in a way that selects portfolios which are practically implementable, while still achieving performance guarantees. This list is not exhaustive and we stress that detailed analyses of these approaches are beyond the scope of the current study.  
    \paragraph{Pathwise Constraints.} One approach is to impose pathwise constraints which limits the leverage or loss that the portfolio exhibits in a pathwise, model-free manner. A common restriction of this type is to impose \emph{long-only} constraints or other, similar, leverage constraints on the portfolio process $\pi(\cdot)$. Analogously, one can impose constraints on the wealth process $V^\pi(\cdot)$ directly, for example via \emph{drawdown constraints} which do not allow the wealth process to fall below a specified fraction of its historical maximum. Growth maximization under drawdown constraints is known to be a tractable problem, where the optimizer is a model-free transformation of the unrestricted growth-optimal portfolio \cite{kardaras2017numeraire}.

    \paragraph{Risk Constraints.} Investors often target a certain realized portfolio volatility or other target risk metrics. Imposing such constraints, either directly or indirectly by penalizing the objective function, can be used to obtain tamer optimal portfolios. Additionally, specifying different optimality criteria and investor preferences, for example via utility functions, is another way to encode risk and leverage considerations directly into the optimization problem. 
    
    
    \paragraph{Portfolio Structure Constraints.} Rather than optimizing  over the space of all portfolios one can instead restrict to a subset exhibiting certain desired structural properties. A natural class consists of \emph{functionally generated portfolios} or its subsets, such as those portfolios generated by concave functions. Concave generating functions lead to long-only portfolios, which inherently do not require leverage. For example, the diversity weighted portfolio \eqref{eqn:diversity_weights} is generated by the concave function \eqref{eqn:diversity_function}. Additionally, optimizing over a functionally generated class of portfolios may simplify the optimization problem as the search space is no longer over a high-dimensional vector of portfolio weights, but rather over a single function specifying the investment rule. 
    Approaches of this type have been the central focus of several recent studies. In a previous paper \cite{itkin2020robust} we obtain a partial differential equation characterizing an optimal concave generating function, as well as a numerical method for its solution, in a setting which allows for model uncertainty. In \cite{campbell2021functional} the authors propose and implement an efficient functional portfolio optimization algorithm which is data-driven and based on regularized empirical risk minimization. Very recently, the paper \cite{cuchiero2023signature} considered path functional portfolios based on signature methods which are numerically efficient to implement.

\begin{figure}
    \centering
    \includegraphics[scale = 0.5]{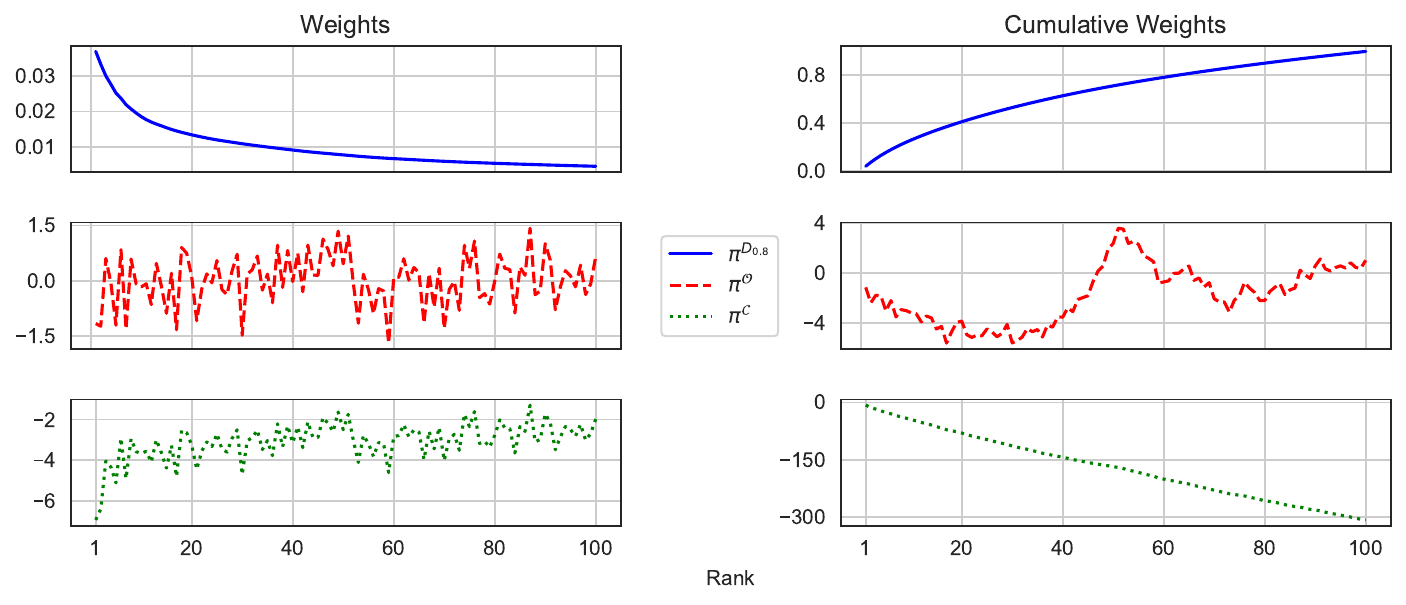}
    \caption{Portfolio weights for the top 100 assets when $X(t) = \widehat \mu$ for the diversity portfolio with $p = 0.8$ (top panels), the growth-optimal portfolio in the open market with $N = 100$ (middle panels) and the growth-optimal portfolio in the closed market (bottom panels). The left panels plot the portfolio weights $\pi_{n_k}$ while the right panels plot the cumulative portfolio weights $\pi_{n_1} + \dots + \pi_{n_k}$. }
    \label{fig:portfolio_weights}
\end{figure}

\appendix


\section{Proofs} \label{sec:proofs}

In this appendix we prove the theoretical results of Section~\ref{sec:model}. In view of Theorem~\ref{thm:rsde} we also recall the notions of existence and uniqueness for RSDEs for the benefit of the reader.

\subsection{Proof of Theorem~\ref{thm:existence} and Theorem~\ref{prop:no_blowup}} \label{app:pf_existence_no_blowup}

We start with the existence result of Theorem~\ref{thm:existence}. The main tool for proving the result is a classical result of Krylov \cite[Theorem~2.6.1]{krylov2008controlled}, which can be used to establish existence for SDEs with measurable coefficients. However, that result requires uniform ellipticity of the diffusion matrix, which we do not have near the boundary of the domain, so a localization argument is required.

\begin{proof}[Proof of Theorem~\ref{thm:existence}]
	We first establish existence of \eqref{eqn:X_dynamics}. To this end we set $X_d = 1-X_1-\dots - X_{d-1}$ and work with the SDE for $(\widetilde X_1,\dots, \widetilde X_{d-1}) := (X_1,\dots,X_{d-1})$ on 
	\[\widetilde \Delta^{d-1} := \{x \in \R_{+}^{d-1}: x_1 + \dots + x_{d-1} \leq 1\text\},\]
	which is a $(d-1)$-dimensional subset of $\R^{d-1}$. The SDE for $\widetilde X$ is of the form
	\begin{equation} \label{eqn:projected_SDE}
		d\widetilde X_i(t) = \widetilde b_i(\widetilde X(t))\, dt + \sum_{j=1}^{d-1} \widetilde \sigma_{ij}(\widetilde X(t))\, d\widetilde W_j(t), \quad i=1,\dots,d-1
	\end{equation} for coefficients $\widetilde b$ and $\widetilde \sigma$, which can be explicitly worked out from \eqref{eqn:X_dynamics} and a $(d-1)$-dimensional Brownian motion $\widetilde W$. Importantly $\widetilde \sigma$ is uniformly elliptic on $\widetilde \Delta^{d-1}_{+,n} := \widetilde \Delta^{d-1} \cap \{x: \min\{x_{(d-1)}, 1-x_1-\dots-x_{d-1}\} > 1/n\}$ for every $n \in \N$.  Next we extend the diffusion coefficients to all of $\R^{d-1}$ by setting
	\begin{align*}
		\widehat b^n(x)  = \widetilde b(x)1_{\widetilde \Delta^{d-1}_{+,n}}(x), \quad 
		 \widehat \sigma^n(x)  = \widetilde \sigma(x) 1_{\widetilde \Delta^{d-1}_{+,n}}(x) +  I_{d-1}1_{\R^{d-1}\setminus \widetilde \Delta^{d-1}_{+,n}}(x),
	\end{align*}
 where $I_{d-1}$ is the $(d-1)\times (d-1)$ identity matrix.
Since $\widehat\sigma^n$ is uniformly elliptic on $\R^{d-1}$ and both $\widehat b^n$ and $\widehat \sigma^n$ are measurable and bounded, \cite[Theorem~2.6.1]{krylov2008controlled} yields a weak solution $\widehat X^n$ to the SDE 
\[d\widehat X^n(t) = \widehat b(\widehat X^n(t))\, dt + \widehat \sigma(\widehat X^n(t))\, d\widetilde W(t), \quad \widehat X^n(0) = x^0.\]
The stopped process $\widetilde X^n := (\widehat X^n)^{\tau_n}$, where \[\tau_n = \inf\{t \geq 0: \widehat X^n(t) \in \partial \widetilde \Delta^{d-1}_{+,n}\},\] then solves \eqref{eqn:projected_SDE} on the stochastic time interval $[0,\tau_n)$.

Next we note that the laws of $(\widetilde X^n)_{n \in \N}$ are tight since the coefficients $\widehat b^n$ and $\widehat \sigma^n$ are uniformly bounded (see e.g.\ \cite[Theorem~3]{zheng1985tightness}). Consequently, through a subsequence, we may pass to a limiting diffusion $\widetilde X$. Since each $\widetilde X^n$ is a weak solution to \eqref{eqn:projected_SDE}
on $[0,\tau_n)$ it follows that $\widetilde X$ solves \eqref{eqn:projected_SDE} on the stochastic time interval $[0,\widetilde\tau_n)$, where $\widetilde \tau_n = \inf\{t \geq 0: \widetilde X(t) \in \partial \widetilde \Delta^{d-1}_{+,n}\}$. As this holds for every $n$ we see that $\widetilde X$ actually solves \eqref{eqn:projected_SDE} on $[0, \widetilde\tau)$ where 
\[\widetilde \tau = \lim_{n \to \infty} \widetilde \tau_n = \inf\{t \geq 0: \min\{\widetilde X_{(d-1)}(t), 1-\widetilde X_1(t) - \dots - \widetilde X_{d-1}(t)\} = 0\}.\] 

Recalling that $(X_1,\dots,X_{d-1}) = (\widetilde X_1,\dots,\widetilde X_{d-1})$ and $X_d = 1-X_1 - \dots -X_{d-1}$ we see that $\widetilde \tau = \tau^X$ and finally obtain that $X = (X_1,\dots,X_{d})$ solves the original SDE \eqref{eqn:X_dynamics} on the stochastic time interval $[0,\tau^X)$.

To establish existence of \eqref{eqn:S_dynamics} we first consider the above constructed diffusion $X$ with initial condition $x^0 = s^0/(s^0_1 + \dots + s^0_d)$. Next we define
\begin{equation} \label{eqn:Sigma}
	\overline S(t) := (s^0_1 + \dots + s^0_d)\exp\left(\int_0^t\left( \lambda - \frac{1}{2}\sum_{i=1}^d \sigma_{r_i(u)}^2X_i(u) \right)du + \int_0^t \sum_{i=1}^d \sigma_{r_i(u)} \sqrt{X_i(u)}\, dW_i(u)\right),
\end{equation} 
where we recall that $\lambda = a_1 + \dots + a_d$.
Then a direct calculation using the product rule shows that $S(t) := X(t)\overline S(t)$ satisfies \eqref{eqn:S_dynamics} on the the stochastic time interval $[0,\tau^X) = [0,\tau^S)$. 
\end{proof}

Next we establish Theorem~\ref{prop:no_blowup}. The proof follows in a similar fashion to \cite{itkin2021open} by recursively constructing a family of Lyapunov functions.

\begin{proof}[Proof of Theorem~\ref{prop:no_blowup}]
For $k = 1,\dots,d$ we introduce the following notation for this proof
\[\overline X_{(k)} = X_{(k)} + \dots + X_{(d)}, \qquad \overline a_k = a_k + \dots + a_d, \qquad \sigma_k^* = \max\{\sigma_k,\dots,\sigma_d\}.\]

We first prove the result for a process $X$ satisfying \eqref{eqn:X_dynamics}.
Establishing $\tau^X = \infty$ is equivalent to showing that $X_{(d)}$ does not hit zero. We will inductively show that $\overline X_{(k)}$ does not hit zero for $k = 1,\dots,d$ and conclude the result since $\overline X_{(d)} = X_{(d)}$. To this end recall the dynamics of $X_{()}$ given by \eqref{eqn:ranked_weights}. From those dynamics we observe that 
\begin{equation} \label{eqn:barX_k}
\begin{split}
	d\overline X_{(k)}(t) & = \left(\overline a_k - \lambda\overline X_{(k)}(t) - \sum_{j=k}^d\sigma^2_{j}X_{(j)}(t) + \overline X_{(k)}(t)\sum_{j=1}^d \sigma_{j}^2X_{(j)}(t)\right)dt\\
	& \hspace{1cm} + \sum_{j=k}^d\sigma_{j}\sqrt{X_{(j)}(t)}dB_j(t) - \overline X_{(k)}(t)\sum_{j=1}^d \sigma_{j}\sqrt{X_{(j)}(t)}dB_j(t) - d\overline \Phi_k(t),
\end{split}
\end{equation}
    where, as before, $d\overline \Phi_k(t) := \sum_{j=1}^k \Phi_j(t) =  \sum_{l=1}^{k-1}\sum_{j=k}^d (N_j(t))^{-1} dL_{l,j}(t)$.
	Hence, $\overline \Phi_k$ is increasing and
	\begin{equation} \label{eqn:supp_Phi}
		\mathrm{supp}(d\overline \Phi_k) \subset \{t:X_{(k-1)}(t) = X_{(k)}(t)\}.
	\end{equation} 
It will also be useful to define the stopping times
\[
	\tau_k^n := \inf\{t\geq 0: \overline X_{(k)}(t) \leq 1/n\}
\] for any $k \in \{1,\dots,d\}$ and $n \in \N$.

We now proceed by induction. Note that $\overline X_{(1)}(t) = \sum_{i=1}^d X_i(t) = 1$, which never hits zero and establishes the base case. Now pick $k \in \{2,\dots,d\}$ and assume that $\overline X_{(k-1)}(t)$ does not hit zero. We wish to show that $\overline X_{(k)}(t)$ does not hit zero either.  For any $T \geq 0$ and any $m,n \in \N$ we have by It\^o's formula that
\begin{equation} \label{eqn:lyapunov_bound}
\begin{split}
	-\log\left(\frac{ \overline X_{(k)}(T \land \tau_k^n \land \tau_{k-1}^m)}{\overline X_{(k)}(0)}\right) &=  \int_0^{T \land \tau_k^n \land \tau_{k-1}^m}\left[ \frac{1}{2\overline X_{(k)}(t)}\left( \frac{\sum_{j=k}^d\sigma_j^2X_j(t)}{\overline X_{(k)}(t)} - 2\overline a_k - X_{()}(t)^\top \sigma^2\right) +  \lambda\right]dt \\
	& \hspace{0.5cm} + M_k(T \land \tau_k^n \land \tau_{k-1}^m) +\int_0^{T \land \tau_k^n\land \tau_{k-1}^m}\frac{1}{\overline X_{(k)}(t)} d\overline \Phi_k(t) \\
	& \leq \int_0^{T \land \tau_k^n\land \tau_{k-1}^m} \frac{1}{\overline X_{(k)}(t)}\left(\frac{(\sigma_k^*)^2}{2} - \bar a_k\right)\, dt + |\lambda|T  \\
	& \hspace{0.5cm} + M_k(T \land \tau_k^n\land \tau_{k-1}^m) + \int_0^{T \land \tau_k^n\land \tau_{k-1}^m}\frac{2}{\overline X_{(k-1)}(t)} d\overline \Phi_k(t) \\
	& \leq |\lambda|T + M_k(T \land \tau_k^n \land \tau_{k-1}^m) + 2m\overline \Phi_k(T)
	\end{split} 
\end{equation} 
for some martingale $M_k(\cdot \land \tau_k^n \land \tau_{k-1}^m)$. In the first inequality we used \eqref{eqn:supp_Phi}, the fact that $d\overline \Phi_k$ is increasing and the inequality $2\overline X_{(k)}(t) \geq \overline X_{(k-1)}(t)$ whenever $X_{(k-1)}(t) = X_{(k)}(t)$. In the second inequality we used \eqref{eqn:parameter_condition} and the fact that $\overline X_{(k-1)}(t) \geq 1/m$ for $t \leq \tau^m_{k-1}$. Now taking expectation in \eqref{eqn:lyapunov_bound}, sending $n \to \infty$ and using Fatou's lemma we obtain
\[\E\left[-\log\left(\frac{ \overline X_{(k)}(T \land \tau^{\overline X_{(k)}} \land \tau_{k-1}^m)}{\overline X_{(k)}(0)}\right)\right] \leq |\lambda| T + 2m\E[\overline \Phi_k(T)].\]
From the dynamics \eqref{eqn:barX_k} it is clear that $\E[\overline \Phi_k(T)] < \infty$. Since $-\log\overline X_{(k)}(\tau^{\overline X_{(k)}}) = \infty$ it follows that 
\[\P(\tau^{\overline X_{(k)}} < T \land \tau_{k-1}^m) = 0.\]
Since $T$ and $m$ are arbitrary we can send them both to infinity to obtain that 
\[\P(\tau^{\overline X_{(k)}} < \tau^{\overline X_{(k-1)}}) = 0.\]
By the inductive hypothesis we have that $\tau^{\overline X_{(k-1)}} = \infty$, $\P$-a.s.\ so we see that $\tau^{\overline X_{(k)}} = \infty$, $\P$-a.s.\ completing the inductive step. 

The analogous claim about $S$ now follows from the above result for $X$. Indeed, suppose that we start with a diffusion $S$ satisfying \eqref{eqn:S_dynamics}.  Then the total capitalization process $\overline S$ has dynamics
\[
\frac{d\overline S(t)}{\overline S(t)} = \lambda\,  dt + \left( \sum_{i=1}^d \sigma_{r_i(t)}^2 \frac{S_i(t)}{\overline S(t)} \right)^{1/2} d{\overline W}(t) \quad \text{on } [0,\tau^{\overline S})
\]
for some one-dimensional Brownian motion $\overline W$. Thus, $\overline S$ is the stochastic exponential of an It\^o diffusion with uniformly bounded coefficients. Consequently, it cannot hit zero in finite time.
As such, we can define $X_i(t) = S_i(t)/\overline S(t)$ on $[0,\tau^S)$ and note that $S_{(d)}(t) = 0$ if and only if $X_{(d)}(t) = 0$. This is a probability zero event by the result proved for $X$ above, which completes the proof.
\end{proof}

\subsection{RSDEs and the proof of Theorem~\ref{thm:rsde}}
\label{app:rsde}

Next, we turn towards establishing well posedness of the  RSDE that the ranked market weight process \eqref{eqn:ranked_weights} satisfies. First we recall what it means to solve an RSDE. We fix a filtered probability space $(\Omega, \F, (\Fcal(t))_{t\geq0}, \P)$ supporting a Brownian motion $B$ and where $\Fcal(t)$ is the right-continuous enlargement of the filtration generated by $B$. In the definition that follows $D$ is a bounded convex subset of $\R^d$ and $b_i, \sigma_{ij}: \overline D \to \R$ are measurable locally bounded functions.

\begin{defn}
	Let $\tau$ be an $\Fcal(t)$-stopping time.
	A pair of continuous $\Fcal(t)$-adapted processes $(Y,\Phi)$ is a strong solution to the reflected SDE (with normal reflection) 
	\begin{equation} \label{eqn:rsde}
		dY(t) = b(Y(t))\, dt + \sigma(Y(t))\, dB(t) + d\Phi(t), \qquad Y(0) = y^0
	\end{equation}
	on $D$ with initial condition $y^0 \in \overline D$ and on the time interval [0,$\tau$) if
	\begin{itemize}[noitemsep]
		\item $Y(t) \in \overline D$ for every $0 \leq t < \tau$,
		\item $\Phi(t) = (\Phi_1(t),\dots,\Phi_d(t))$ is a finite variation process satisfying for every $0 \leq t < \tau$
  \begin{itemize}[noitemsep]
      \item  $\Phi(0) = 0,$
      \item $\int_0^t 1_{\{Y(s) \not \in \partial D\}}\,d|\Phi(s)| = 0$,
      \item $\Phi(t) = \int_0^t n(Y(s))^\top\, d|\Phi(s)|$ where $n(Y(s))$ is an inward pointing normal vector (uniquely determined almost everywhere with respect to the measure $d|\Phi|$) at $Y(s) \in \partial D$, 
  \end{itemize}
		\item we have $\P$-a.s
		\[Y_i(t) = y^0_i + \int_0^t b_i(Y(t))\, dt + \int_0^t \sum_{j=1}^d \sigma_{ij}(Y(t))\, dB_j(t) + \Phi_i(t), \quad i =1,\dots,d\]
		on the interval $[0,\tau)$.  
	\end{itemize}
We say that pathwise uniqueness holds for \eqref{eqn:rsde} if any two solutions $(Y,\Phi)$ and $(\widetilde Y, \widetilde \Phi)$ are indistinguishable as stochastic processes. 
\end{defn}

We now turn towards proving Theorem~\ref{thm:rsde}. 
\begin{proof}[Proof of Theorem~\ref{thm:rsde}] 

Existence is already established by setting $Y = X_{()}$, where $X$ is the process constructed in Theorem~\ref{thm:existence}. Hence we focus on pathwise uniqueness. To this end we define the subdomains $D_n = \{y \in \nabla^{d-1}_+: y_d > 1/n\}$ and consider the RSDE \eqref{eqn:Y_rsde} on $D_n$. Since the drift and diffusion coefficients are Lipschitz-continuous on $D_n$ we have existence and pathwise uniqueness of \eqref{eqn:Y_rsde} on $D_n$ courtesy of \cite[Theorem~3.1]{tanaka1979stochastic}. Note, however, that if $Y$ is a solution to \eqref{eqn:Y_rsde} on $D$ then the stopped process $Y^{\tau_n}$ is a solution to \eqref{eqn:Y_rsde} on $D_n$ on the time interval $[0,\tau_n)$, where $\tau_n = \inf\{t \geq 0: Y(t) \not \in D_n\}$. By pathwise uniqueness for the RSDE on $D_n$ it then follows that any solution $Y$ to \eqref{eqn:Y_rsde} satisfies $Y = X_{()}$ on $[0,\tau_n)$ since $X_{()}$ is itself a solution to \eqref{eqn:Y_rsde} on the domain $D_n$ and on the time interval $[0,\tau_n)$. Sending $n \to \infty$ yields that $Y = X_{()}$ on the stochastic time interval $[0,\tau^Y) = [0,\tau^{X_{()}}) = [0,\tau^X)$. Under the condition \eqref{eqn:parameter_condition} we know from Proposition~\ref{prop:no_blowup} that $\tau^X = \infty$. Since the solution to \eqref{eqn:Y_rsde} is unique it follows from standard arguments that the solution is a strong Markov process (see e.g.\  \cite[Section~6.2]{Stroock1979Multi}).
\end{proof}

\subsection{Proof of Theorem~\ref{thm:ergodicity}} \label{app:ergodic}

Finally we turn towards proving the ergodic property for $X_{()}$. Existence of an invariant measure is immediate by the compactness of $\nabla^{d-1}$ and non-attainment of the set $\nabla^{d-1}\setminus \nabla^{d-1}_+$.
Ergodicity will then follow from the standard ergodic theory machinery once the uniqueness of an invariant measure is established. The standard approach to obtain this for reflected diffusions was first initiated by Harrison and Williams \cite{harrison1987brownian} for reflected Brownian motion. Our approach follows the same reasoning but is applied to our setting of more general RSDE dynamics.

We first establish the following lemma, which is a version of \cite[Lemma~3.5]{duarte2020reflected} in our setting. Due to the ordered simplex being a $(d-1)$-dimensional subset of $\R^d$, analogously to what was done for $\Delta^{d-1}$ in the proof of Theorem~\ref{thm:existence}, it will be convenient to define its projection onto $\R^{d-1}$. To this end we define the set
\[\widetilde \nabla^{d-1}_+:= \left\{y \in \R^{d-1}_{++}: y_1 \geq \dots \geq y_{d-1} \geq 1- y_1 - \dots - y_{d-1} \text{ and } \sum_{k=1}^{d-1} y_{k} < 1\right\}\] and the projection map $\eta: \nabla^{d-1}_+ \to \widetilde \nabla^{d-1}_+$ via $\eta(y) = (y_1,\dots,y_{d-1})$. We also recall that $\P_y$ denotes the law of $X_{()}$ initiated at $y \in \nabla^{d-1}_+$.

\begin{lem}  \label{lem:technical} 

Assume condition \eqref{eqn:parameter_condition}.
\begin{enumerate}[noitemsep]
\item \label{item:zero_boundary_time} We have  $\int_0^\infty 1_{\partial \nabla^{d-1}_+}(X_{()}(t))\, dt = 0$, $\P_y$-a.s.\ for any $y \in \nabla^{d-1}_+$.
\item \label{item:equivalence} If $\nu$ is an invariant measure for $X_{()}$ then its pushforward $\nu \circ \eta^{-1}$ on $\widetilde \nabla^{d-1}_+$ is equivalent to the Lebesgue measure $\Lcal_{d-1}$ on $\widetilde \nabla^{d-1}_+$. That is
\begin{equation} \label{eqn:equivalence}
    \nu \circ \eta^{-1}(A) = 0 \iff \Lcal_{d-1}(A) = 0
\end{equation}
for every $A \in \mathcal{B}(\widetilde \nabla^{d-1}_+)$.
\end{enumerate}
\end{lem}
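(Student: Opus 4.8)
The plan is to treat the two claims separately and use part \ref{item:zero_boundary_time} as an input to part \ref{item:equivalence}. Throughout I work under \eqref{eqn:parameter_condition}, so by Theorem~\ref{prop:no_blowup} the smallest weight $X_{(d)}$ never hits zero; in particular the face $\{y_d = 0\}$ of $\partial \nabla^{d-1}_+$ is never visited and contributes nothing to the occupation-time integral. It therefore suffices to control the \emph{collision faces} $\{y : y_k = y_{k+1}\}$, $k=1,\dots,d-1$, whose union together with $\{y_d = 0\}$ exhausts $\partial \nabla^{d-1}_+$.

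For part \ref{item:zero_boundary_time} I would fix $k$ and apply the occupation-times formula to the continuous semimartingale $Z_k := X_{(k)} - X_{(k+1)}$, whose dynamics follow from \eqref{eqn:ranked_weights}. Taking the indicator of the single point $\{0\}$ in that formula gives $\int_0^t 1_{\{Z_k(s)=0\}}\, d\langle Z_k\rangle(s) = 0$, since a point is $\Lcal_1$-null. A short computation from \eqref{eqn:ranked_weights} shows that on the set $\{Z_k = 0\}$ the cross terms proportional to $Z_k$ drop out and the quadratic variation density equals $(\sigma_k^2 + \sigma_{k+1}^2)X_{(k)}(s)$. Localizing with the stopping times $\rho_m := \inf\{t : X_{(d)}(t) \le 1/m\}$, on $[0,\rho_m)$ we have $X_{(k)} \ge X_{(d)} > 1/m$, so this density is bounded below by $(\sigma_k^2 + \sigma_{k+1}^2)/m$ on the collision set. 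Hence $\int_0^{t \wedge \rho_m} 1_{\{Z_k(s)=0\}}\, ds = 0$, and letting $m \to \infty$ (using $\rho_m \uparrow \infty$ by non-attainment) and then summing over $k$ yields \ref{item:zero_boundary_time}.

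For part \ref{item:equivalence} I would follow the Harrison--Williams strategy as adapted in \cite{duarte2020reflected}, working with the projected process $\eta(X_{()})$ on $\widetilde \nabla^{d-1}_+$, which solves a reflected SDE whose diffusion matrix is uniformly elliptic on every compact subset of the open domain. The inclusion $\nu \circ \eta^{-1} \ll \Lcal_{d-1}$ follows from existence of a transition density: interior uniform ellipticity yields, for each $t > 0$ and each interior starting point $y$, a kernel $P_t(y,dz)$ absolutely continuous with respect to $\Lcal_{d-1}$, and integrating the invariance identity $\nu \circ \eta^{-1}(A) = \int P_t(y,A)\, \nu \circ \eta^{-1}(dy)$ against this density gives the claim; part \ref{item:zero_boundary_time} is precisely what guarantees that the boundary reflection contributes no singular mass to these kernels. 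For the reverse inclusion I would show the transition density $p_t(y,z)$ is \emph{strictly positive} for interior $y,z$, via a Stroock--Varadhan-type support argument establishing interior irreducibility of the reflected diffusion; then the invariant density $z \mapsto \int p_t(y,z)\, \nu \circ \eta^{-1}(dy)$ is strictly positive $\Lcal_{d-1}$-a.e.\ on the interior, which is exactly $\Lcal_{d-1} \ll \nu \circ \eta^{-1}$.

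The main obstacle is part \ref{item:equivalence}: establishing existence and strict positivity of the transition density for the \emph{reflected} diffusion near the boundary, where standard interior parabolic regularity must be supplemented both by control of the reflection (a Neumann-type boundary condition) and by the degeneracy of the diffusion coefficients as $X_{(d)} \to 0$. This is exactly where the localization via $\rho_m$, the non-attainment from Theorem~\ref{prop:no_blowup}, and the compact-set uniform ellipticity combine to reduce matters to the nondegenerate interior situation handled in \cite{harrison1987brownian, duarte2020reflected}.
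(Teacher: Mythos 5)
Your proof of part \ref{item:zero_boundary_time} is correct and essentially identical to the paper's: both compute the quadratic variation density of $X_{(k)}-X_{(k+1)}$ on the collision set, obtain $(\sigma_k^2+\sigma_{k+1}^2)X_{(k+1)}(t)$ there, bound it below after localizing with a stopping time that keeps the weights away from zero, and conclude via the occupation-times formula together with non-attainment from Theorem~\ref{prop:no_blowup}.

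Part \ref{item:equivalence} is where your proposal has a genuine gap. You reduce both inclusions to the existence (and, for the reverse inclusion, strict positivity) of fixed-time transition densities $p_t(y,z)$ for the \emph{reflected} diffusion $\eta(X_{()})$ -- but that is essentially the hard content of the lemma, and the reduction you sketch does not deliver it. Localization before the boundary hitting time changes the kernel: paths contributing to $P_t(y,\cdot)$ may have reflected off the collision faces many times before time $t$, so interior ellipticity plus non-attainment cannot, by themselves, rule out a singular component of the interior law after reflections. Note also that part \ref{item:zero_boundary_time} controls the occupation \emph{time} of the boundary, which is a different statement from the absolute continuity of $P_t(y,\cdot)$; and a Stroock--Varadhan support argument yields topological support of the law, which does not imply $\Lcal_{d-1}$-a.e.\ strict positivity of a density. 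For the ordered simplex (a polyhedron with corners) with coefficients that degenerate as $X_{(d)}\to 0$, heat-kernel existence and positivity for the reflected process are not available off the shelf; the Harrison--Williams method exists precisely to avoid them.

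The paper's argument never touches fixed-time kernels of the reflected process. It decomposes the expected occupation measure $\E_\nu\bigl[\int_0^\infty 1_A(\eta(X_{()}(t)))\,dt\bigr]$ into excursions between a compact interior set $K$ and the stopping time $\tau$ at which the process approaches the boundary or the degeneracy region $\{X_{(d)}\le 1/m\}$; on each excursion the reflection terms vanish and the projected process coincides pathwise with an auxiliary \emph{unreflected} SDE on $\R^{d-1}$ whose coefficients are globally Lipschitz and uniformly elliptic, for which strict positivity of the transition density is standard. Both directions of \eqref{eqn:equivalence} then drop out simultaneously from a zero-or-infinity dichotomy: by stationarity the left-hand side equals $\int_0^\infty \nu\circ\eta^{-1}(A)\,dt$, while the excursion sum on the right is zero when $\Lcal_{d-1}(A)=0$ and infinite when $\Lcal_{d-1}(A)>0$. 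If you want to salvage your outline, you should replace the transition-density step by this excursion/occupation-measure comparison (as in \cite{harrison1987brownian} and \cite{duarte2020reflected}); your invariance identity then needs only the a.e.-in-$t$ information that the occupation measure provides, not kernel regularity at every fixed $t$.
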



\begin{proof}
Note that 
\[\partial\nabla^{d-1}_+ = \bigcup_{k=1}^{d-1}\{y \in \nabla^{d-1}: y_k = y_{k+1}\} \cup \{y \in \nabla^{d-1}: y_d = 0\}.\]
Since we already established that $\P_y(X_{(d)}(t) = 0 \text{ for some } t >0 ) = 0$ to prove (i) it just suffices to show that 
\begin{equation} \label{eqn:zero_boundary_time}
    \int_0^\infty 1_{\{0\}}(X_{(k)}(t) - X_{(k+1)}(t))\, dt = 0  
\end{equation} 
for $k = 1,\dots,d-1$. To this end fix $k$ and for $n \in \N$ define the stopping time $\tau_n := \inf\{t \geq 0: X_{(k+1)}(t) \leq 1/n\}$. We also define $\gamma_k(X_{()}(t)) :=  \frac{\langle X_{(k)} - X_{(k+1)} \rangle(t)}{dt}$.  A direction computation using \eqref{eqn:ranked_weights} yields that when $X_{(k)}(t) = X_{(k+1)}(t)$ we have that
\[\gamma_k(X_{()}(t)) = (\sigma_k^2 + \sigma_{k+1}^2)X_{(k+1)}(t).\]
For $t \leq \tau_n$ this is bounded from below by $C_n := \frac{\sigma_k^2 + \sigma_{k+1}^2}{n}> 0$. With these preliminary estimates in hand we can now use the occupation density formula to deduce that on the set $\{\omega:  \tau_n(\omega)>T\}$,
\begin{align*} \int_0^T  1_{\{0\}}(X_{(k)}(t) - X_{(k+1)}(t))\, dt &  \leq C_n^{-1} \int_0^T  1_{\{0\}}(X_{(k)}(t) - X_{(k+1)}(t))\gamma_k(X_{()}(t))\, dt \\
 & = C_n^{-1}  \int_0^T  1_{\{0\}}(X_{(k)}(t) - X_{(k+1)}(t)) d\langle X_{(k)} - X_{(k+1)}\rangle(t) \\
 & =  C_n^{-1}\int_{\R} 1_{0}(z) L^z_{X_{(k)} - X_{(k+1)}}(T)\, dz = 0.
\end{align*} 
Since $n$ and $T$ were arbitrary and $\tau_n \to \infty$, $\P_y$-a.s.\ as $n \to \infty$ this establishes \eqref{eqn:zero_boundary_time}. Hence \ref{item:zero_boundary_time} is proved.

Next we turn towards the proof of \ref{item:equivalence}. Since $\Lcal_{d-1}(\partial \widetilde \nabla^{d-1}_+) = 0$ and $X_{()}$ spends zero Lebesgue time on the boundary by part \ref{item:zero_boundary_time} it suffices to prove \eqref{eqn:equivalence} for $A \subset K \subset \widetilde \nabla^{d-1}_+$ where $K$ is a compact set satisfying $\mathrm{dist}(K, \partial \widetilde \nabla^{d-1}_+) > 0$. We fix such an $A$ and following \cite{harrison1987brownian} define 
\begin{align*} \tau & = \inf\{t \geq 0: \eta(X_{()}(t)) \in \partial \widetilde \nabla^{d-1}_+ \text{ or } X_{(d)}(t) \leq 1/m\},
\intertext{and}
\zeta & = \inf\{t \geq 0: \eta(X_{()}(t))\in K \},
\end{align*} where $m$ is chosen so that $\mathrm{dist}(K,\partial \widetilde  \nabla^{d-1}_+) > 1/m$. Next we set $\zeta_0 = 0$ and recursively define
\begin{align*}
\tau_n & := \zeta_{n-1} + \tau \circ \theta_{\zeta_{n-1}}  \\
\zeta_n & := \tau_n + \zeta \circ \theta_{\tau_n} 
\end{align*}
for $n \geq 1$, where $\theta_\cdot$ denotes the shift operator for $X_{()}$. Then we have for any $y \in \nabla^{d-1}_+$ the relationship
\begin{equation} \label{eqn:shift}
    \E_y\left[\int_0^\infty 1_A(\eta(X_{()}(t)))\, dt\right] = \E_y\left[ \sum_{n=1}^\infty \int_{\zeta_{n-1}}^{\tau_n} 1_A(\eta(X_{()}(t)))\, dt\right] =  \E_y\left[ \sum_{n=1}^\infty \E_{\zeta_{n-1}} \left[\int_{0}^{\tau} 1_A(\eta(X_{()}(t)))\, dt\right]\right],
\end{equation}
where, depending on the set $A$, the equality could hold with infinity on both sides.

Note that on the stochastic time interval $[0,\tau)$, $X_{()}$ doesn't hit the boundary of the domain (and consequently neither does the projection $\eta(X_{()}(t))$) so that $\Phi$ in \eqref{eqn:ranked_weights} doesn't contribute to the dynamics on this time interval. Moreover the drift and diffusion coefficients of $\eta(X_{()})$,  which we can write as functions of $\eta(X_{()})$, are Lipschitz continuous on $\widetilde \nabla^{d-1}_{+,m} :=\{y \in \widetilde \nabla^{d-1}_+: 1-y_1 - \dots - y_{d-1} > 1/m\}$. This motivates us to consider an auxiliary SDE on $\R^{d-1}$ 
\begin{equation} \label{eqn:Z_lipschitz}
    dZ(t) = b(Z(t))\, dt + \sigma(Z(t))\, dW(t), \quad Z(0) = z^0,
\end{equation}
where $b:\R^{d-1} \to \R$ and $\sigma: \R^{d-1} \to \mathbb{S}^{d-1}_{++}$ are chosen so that
\begin{itemize}[noitemsep]
    \item $b$ and $\sigma$ are Lipschitz continuous functions with at most linear growth,
    \item $\sigma\sigma^\top$ is uniformly elliptic; i.e. there exists a $\kappa > 0$ such that $\xi^\top \sigma(z)\sigma^\top(z) \xi  \geq \kappa \|\xi\|^2$ for every $\xi, z\in \R^{d-1}$,
    \item $b$ and $\sigma$ coincide with the drift and diffusion coefficients of $\eta(X_{()})$ on the set $\widetilde \nabla^{d-1}_{+,m}$.
\end{itemize} The standard SDE theory yields a pathwise unique solution to \eqref{eqn:Z_lipschitz}. Hence, by localization, we see that almost surely $Z(t) = \eta(X_{()}(t))$ for $t \in [0,\tau)$ whenever $z^0 = \eta(x_{()}^0) \in \widetilde \nabla^{d-1}_{+,m}$. Consequently, we can replace $\eta(X_{()})$ by $Z$ in the right hand side of \eqref{eqn:shift} to obtain
\[  
\E_y\left[\int_0^\infty 1_A(\eta(X_{()}(t)))\, dt\right] = \E_y\left[ \sum_{n=1}^\infty \E_{\zeta_{n-1}} \left[\int_{0}^{\tau} 1_A(Z(t))\, dt\right]\right]. 
\]
Now integrating both sides with respect to $d\nu(y)$ and applying Tonelli's theorem yields
\[
\int_0^\infty \P_\nu(\eta(X_{()}(t))\in A)\, dt = \E_\nu\left[ \sum_{n=1}^\infty \E_{\zeta_{n-1}} \left[\int_{0}^{\tau} 1_A(Z(t))\, dt\right]\right].
\]
We now examine the right hand side. Note that by standard results for well-posed SDEs with Lipschitz continuous coefficients and uniformly elliptic diffusion matrix we have for every $t > 0$ and every initial value $z^0$ that $Z(t)$ has a strictly positive density with respect to the Lebesgue measure. 
It follows that the right hand side is $0$ if $\Lcal_{d-1}(A) = 0$ and infinite otherwise. On the other hand, since $\nu$ is a stationary measure for $X_{()}$ we have that \[\P_\nu(\eta(X_{()}(t))\in A) = \P_{\nu}(\eta(X_{()}(0)) \in A) =  \nu \circ \eta^{-1}(A)\]so that, similarly, it must be that the left hand side is zero if  $\nu \circ \eta^{-1}(A) = 0$ and infinite otherwise. This establishes \ref{item:equivalence} and completes the proof.
\end{proof}

We are now ready to establish Theorem~\ref{thm:ergodicity}.

\begin{proof}[Proof of Theorem~\ref{thm:ergodicity}] As previously mentioned the existence of a stationary distribution is almost immediate. Indeed by viewing $X_{()}$ as a process defined on the state space $\nabla^{d-1}$ we obtain the existence of a stationary probability measure $\nu \in \Pcal(\nabla^{d-1})$ by the Krylov--Bogolyubov theorem (see e.g.\ \cite[Corollary~3.1.2]{da1996ergodicity}). Since, under condition \eqref{eqn:parameter_condition}, $X_{()}$ does not enter the set $\nabla^{d-1} \setminus \nabla^{d-1}_+$ it follows that $\nu(\nabla^{d-1} \setminus \nabla^{d-1}_+) = 0$ so we can view $\nu$ as a probability measure on $\nabla^{d-1}_+$.
Uniqueness follows as a consequence of Lemma~\ref{lem:technical}. Indeed, suppose $\nu_1,\nu_2$ are two stationary measures. Then by Lemma~\ref{lem:technical} their projection to $\R^{d-1}$ are both equivalent to the Lebesgue measure $\Lcal^{d-1}$ and hence equivalent to each other. But by the ergodic decomposition theorem any two distinct invariant measures must have disjoint supports. Hence, it must be that $\nu_1 \circ \eta^{-1}= \nu_2 \circ \eta^{-1}$, which implies that $\nu_1 = \nu_2$ on $\nabla^{d-1}_+$. The remaining claims now follow from standard results in ergodic theory (see e.g.\ \cite{kallen2021found}).

\end{proof}
\section{Rank based estimators} \label{app:ranked_estimators}

In Section~\ref{app:vol_estimator} we compare the performance of the volatility estimator \eqref{eqn:sigma_estimator} with an alternative one using the ranked increments and in Section~\ref{app:phi_derivation} we  derive the collision estimator \eqref{eqn:hat_bar_phi}.

\subsection{Comparison of volatility estimators} \label{app:vol_estimator}

Our volatility parameter estimates are based on the estimator \eqref{eqn:sigma_estimator}, which is a normalized sum of the squared increments $(\log S_{n_k(t_i)}(t_{i+1}) - \log S_{n_k(t_i)}(t_{i}))^2$. As mentioned in Remark~\ref{rem_vol_bias}, one obtains another seemingly natural estimator by using $(\log S_{(k)}(t_{i+1}) - \log S_{(k)}(t_{i}))^2$ instead. While both these estimators are consistent in the fine-discretization limit, the second one is downward biased at fixed discretizations as shown in Figure~\ref{fig:vol-bias}. Although a full analysis is beyond the scope of this paper, let us give a heuristic explanation for the origin of the bias. We wish to acknowledge discussions with Steven Campbell which helped elucidate this point, and indeed alerted us to this issue in the first place. To improve readability, define
\begin{align*}
A &= \log S_{n_k(t_i)}(t_{i+1}) - \log S_{n_k(t_i)}(t_{i}) \\
B &= \log S_{(k)}(t_{i+1}) - \log S_{(k)}(t_{i}) = \log S_{n_k(t_{i+1})}(t_{i+1}) - \log S_{n_k(t_i)}(t_{i}) \\
C &= \log S_{n_k(t_i)}(t_{i+1}) - \log S_{n_k(t_{i+1})}(t_{i+1}).
\end{align*}
The unbiased estimator involves a sum of the $A^2$ and the biased estimator involves a sum of the $B^2$. Note that we have the decomposition
\[
A^2 = (B + C)^2 = B^2 + C^2 + 2BC.
\]
Here the term $BC$ is negligible. Indeed, if the $k$th ranked stock does not switch name from time $t_i$ to $t_{i+1}$, then $C = 0$. Otherwise, the names $n_k(t_i)$ and $n_k(t_{i+1})$ are different and, because the former leaves the $k$th rank and the latter enters the $k$th rank, their respective log-capitalizations will have swapped values with high probability and up to a lower order correction. Thus $B \approx 0$ in this case. Consequently, we have $A^2 \approx B^2 + C^2$, so that the bias originates from the positive term $C^2$. This explains the bias observed in Figure~\ref{fig:vol-bias}. An interesting question for future research is to make the above heuristic reasoning rigorous.

\begin{figure}
    \centering
    \includegraphics[scale = 0.5]{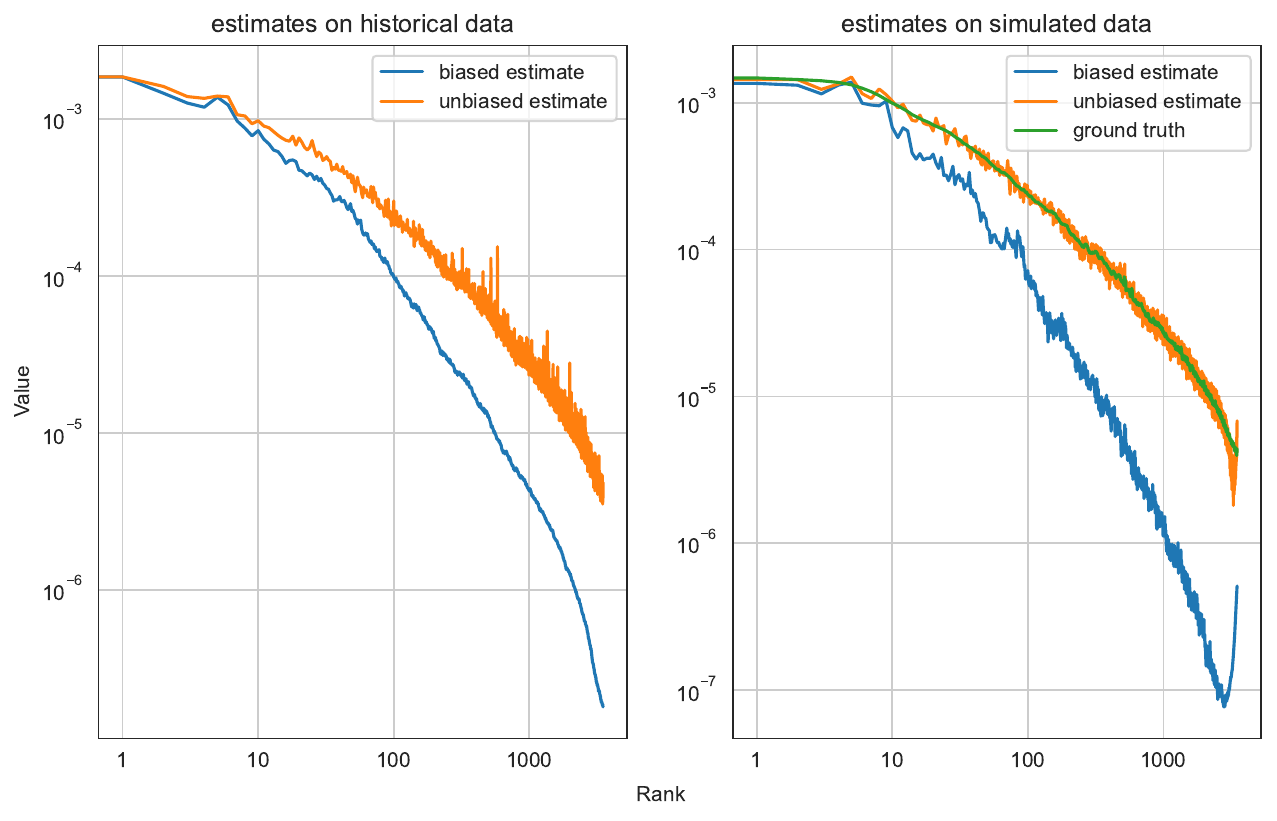}
    \caption{The left panel depicts the raw (unsmoothed) $\sigma^2$ estimates using the two methods on historical data. The right panel shows the performance of the two estimators on a trajectory simulated from the calibrated rank volatility stabilized model sampled at a daily timestep.}
    \label{fig:vol-bias}
\end{figure}

\subsection{Derivation of \eqref{eqn:hat_bar_phi}} \label{app:phi_derivation}
Here we provide a self-contained derivation for the estimator \eqref{eqn:hat_bar_phi} following the approach of Fernholz \cite{fernholz2002stochastic}. To this end fix $k \in \{1,\dots,d-1\}$ and consider the \emph{large cap portfolio} given by
\[\pi^{\mathcal{M}_k}_i(t) :=
  \frac{X_i(t)}{X_{(1)}(t) + \dots + X_{(k)}(t)}1_{\{r_i(t)\leq k\}} \qquad \text{for } i= 1,\dots,d.\]
From this expression it follows from \eqref{eqn:fg_rank} that $\pi^{\mathcal{M}_k}$ is functionally generated by $G(x) = \sum_{j=1}^k x_{(j)}$.  
Writing $V^{\mathcal{M}_k}$ for $V^{\pi^{\mathcal{M}_k}}$ 
we obtain from \eqref{eqn:master_formula} and \eqref{eqn:fg_rank_drift} that 
\begin{align*}
    d\log V^{\mathcal{M}_k}(T) & = d\log G(X(T)) - \frac{1}{X_{(1)}(t) + \dots + X_{(k)}(t)}\, d \overline \Phi_k(t)
\end{align*}
where, as before, $\overline \Phi_k = \Phi_1 + \dots + \Phi_k$.
Hence, we can isolate the reflection term to obtain 
\begin{align} 
\overline \Phi_{k}(T) & = -\int_0^T(X_{(1)}(t) + \dots + X_{(k)}(t))\, d(\log V^{\mathcal{M}_k}- \log(X_{(1)} + \dots + X_{(k)}))(t)  \nonumber \\
& = -\int_0^T(X_{(1)}(t) + \dots + X_{(k)}(t))\, d\log\left(\frac{W^{\mathcal{M}_k}}{S_{(1)}+ \dots + S_{(k)}}\right)(t), \label{eqn:Phi_int}
\end{align}
where in the last equality we recalled that $V^{\mathcal{M}_k}(t) = W^{\mathcal{M}_k}(t)/W^{\mathcal{M}}(t)$, $W^{\mathcal{M}}(t) = \overline S(t)$ and $X_{(j)}(t) = S_{(j)}(t)/\overline S(t)$. 

Now to obtain an estimator for $\overline\phi$ in \eqref{eqn:bar_phi} we seek to discretize the integral in \eqref{eqn:Phi_int}. To this end note that on time periods where the top $k$ constituents in the market do not change, $\pi^{\mathcal{M}_k}$ behaves like a buy-and-hold cap-weighted portfolio in the stocks that, during that time period, are the largest $k$ companies in the market. Rebalancing of the portfolio only occurs when new stocks enter the top $k$. Hence for short observation intervals we can approximate the change in portfolio wealth by the corresponding change in the buy-and-hold portfolio. Indeed, this is precisely the discrete-time implementation of this portfolio. As such, the change in log wealth between observation times $t_i$ and $t_{i+1}$ is approximated by
\begin{equation} \label{eqn:V_discrete}
    \log W^{\mathcal{M}_k}(t_{i+1}) - \log W^{\mathcal{M}_k}(t_i) \approx \log\left(\frac{S_{n_1(t_i)}(t_{i+1}) + \dots + S_{n_k(t_i)}(t_{i+1})}{ S_{n_1(t_i)}(t_i) + \dots + S_{n_k(t_i)}(t_i)}\right).
\end{equation}
Note that the top $k$ stocks at time $t_i$ appear in both expressions, but they are evaluated at different observation times in the numerator and denominator of the right hand side of \eqref{eqn:V_discrete}. Plugging the discretization \eqref{eqn:V_discrete} into \eqref{eqn:Phi_int} yields
\begin{equation} \label{eqn:phi_estimate_final}
\begin{split} \overline \Phi_{k}(T) \approx &  -\sum_{i=0}^{N-1} (X_{(1)}(t_i) + \dots + X_{(k)}(t_i))  \\
& \hspace{0.25cm} \times \left(\log\left(\frac{S_{n_1(t_i)}(t_{i+1}) + \dots + S_{n_k(t_i)}(t_{i+1}))}{S_{(1)}(t_{i+1}) + \dots + S_{(k)}(t_{i+1})}\right) - \log \left(\frac{S_{n_1(t_i)}(t_i) + \dots + S_{n_k(t_i)}(t_i)}{S_{(1)}(t_{i}) + \dots + S_{(k)}(t_{i})}\right)\right) \\
& =\sum_{i=0}^{N-1}(X_{n_1(t_i)}(t_i) + \dots + X_{n_k(t_i)}(t_i))\log\left( \frac{S_{n_1(t_{i+1})}(t_{i+1}) + \dots + S_{n_k(t_{i+1})}(t_{i+1})}{S_{n_1(t_{i})}(t_{i+1}) + \dots + S_{n_k(t_{i})}(t_{i+1})}\right),
\end{split}
\end{equation}
where in the final equality we used the fact that $S_{(j)}(t_i) = S_{n_j(t_i)}(t_i)$ to eliminate one of the log ratio terms. Now dividing by $T$ in \eqref{eqn:phi_estimate_final} yields \eqref{eqn:hat_bar_phi}.

\bibliographystyle{plain}
\bibliography{References}

\end{document}